\newtheorem{proposition}{Proposition}
\newtheorem{theorem}{Theorem}
\newtheorem{definition}{Definition}
\newtheorem{corollary}{Corollary}
\newtheorem{lemma}{Lemma}
\DeclareMathOperator*{\argmin}{argmin}
\newcommand{\N}{\mathbb{N}}
\newcommand{\R}{\mathbb{R}}
\renewcommand{\Pr}{\mathop{\mathrm{Pr}}}
\newcommand{\parens}[1]{{\left({#1}\right)}}
\newcommand{\brackets}[1]{{\left[{#1}\right]}}
\newcommand{\floor}[1]{{\left\lfloor{#1}\right\rfloor}}
\newcommand{\ceil}[1]{{\left\lceil{#1}\right\rceil}}
\newcommand{\poly}{\mathrm{poly}}
\newcommand{\zo}{\{0, 1\}}
\newcommand{\eps}{\varepsilon}
\def\FullBox{\hbox{\vrule width 8pt height 8pt depth 0pt}}
\def\qed{\ifmmode\qquad\FullBox\else{\unskip\nobreak\hfil
\penalty50\hskip1em\null\nobreak\hfil\FullBox
\parfillskip=0pt\finalhyphendemerits=0\endgraf}\fi}
\crefname{proposition}{Proposition}{Propositions}
\crefname{theorem}{Theorem}{Theorems}
\crefname{definition}{Definition}{Definitions}
\crefname{conjecture}{Conjecture}{Conjectures}
\crefname{corollary}{Corollary}{Corollaries}
\crefname{lemma}{Lemma}{Lemmas}
\crefname{claim}{Claim}{Claims}
\crefname{fact}{Fact}{Facts}
\crefname{example}{Example}{Examples}
\crefname{remark}{Remark}{Remarks}
\crefname{table}{Table}{Tables}
\crefname{section}{Section}{Sections}
\crefname{subsection}{Subsection}{Subsections}
\crefname{algorithm}{Algorithm}{Algorithms}
\title{Succinct Oblivious RAM\thanks{A preliminary version of this paper
appeared in the {\it Proceedings of the 35th Symposium on Theoretical Aspects
of Computer Science}, 2018, pp.  52:1--52:16.}}
\author{
  Taku Onodera\thanks{Human Genome Center, Institute of Medical Science, the
  University of Tokyo} \\
  \texttt{tk-ono@hgc.jp}
  \and
  Tetsuo Shibuya\footnotemark[2] \\
  \texttt{tshibuya@hgc.jp}}
\date{}
\begin{document}

\maketitle

\begin{abstract}
Reducing the database space overhead is critical in big-data processing.
In this paper, we revisit oblivious RAM (ORAM) using big-data standard for the
database space overhead.

ORAM is a cryptographic primitive that enables users to perform arbitrary
database accesses without revealing the access pattern to the server.
It is particularly important today since cloud services become increasingly
common making it necessary to protect users' private information from database
access pattern analyses.
Previous ORAM studies focused mostly on reducing the access overhead.
Consequently, the access overhead of the state-of-the-art ORAM constructions
is almost at practical levels in certain application scenarios such as secure
processors.
On the other hand, most existing ORAM constructions require $(1+\Theta(1))n$
(say, $10n$) bits of server space where $n$ is the database size.
Though such space complexity is often considered to be ``optimal'', overhead
such as $10 \times$ is prohibitive for big-data applications in practice.

We propose ORAM constructions that take only $(1+o(1))n$ bits of server space
while maintaining state-of-the-art performance in terms of the access overhead
and the user space.
We also give non-asymptotic analyses and simulation results which indicate that
the proposed ORAM constructions are practically effective.
\end{abstract}

\section{Introduction} \label{section: intro}

\emph{Oblivious RAM (ORAM)} is a cryptographic primitive that enables users to
access a database on a server without revealing the access pattern to the
server.\footnotemark
\footnotetext{In the original paper, an ORAM is defined to be a random access
machine for which the memory access pattern is independent of the
input~\cite{Goldreich87}. Our use of the word ORAM follows the convention of
some subsequent work, e.g., \cite{Ren15,Wang15,Devadas16}.}
Although originally introduced in the context of software
protection~\cite{Goldreich87}, ORAM is directly relevant to the present cloud
computing scenarios.

In the previous studies on ORAM, researchers focused mainly on reducing the
access bandwidth cost, a performance measure used as a proxy of the access
time.
This is because even the current most state-of-the-art ORAM constructions have
two or three orders of magnitude larger bandwidth cost than the ordinary
(non-secure) accesses.
However, in certain settings, the ORAM access is already rather efficient.
For example, Maas et al. proposed PHANTOM~\cite{Maas13}, an ORAM-based secure
processor, and reported that if PHANTOM is deployed on the server, SQLite
queries can be performed without revealing the access pattern at the cost of
1.2--6$\times$ slowdown compared to non-secure SQLite queries.
In such cases, it is reasonable to pay more attention to performance measures
other than the access speed.

In particular, the \emph{server space usage} is a very important performance
measure for big-data applications.
First, there are applications where the amount of data is virtually unbounded,
and thus the limit of the available space defines the limit of the analyses.
Second, due to the cache effect, small memory usage often leads to faster
computation.
Third, space costs money, especially in a cloud computing server.
The second and the third points are especially relevant if the data is meant to
be stored in the main memory (by default), which is exactly the case in ORAM
application scenarios such as PHANTOM.

In most modern ORAM constructions, if the size of the original database is $n$
bits, the amount of the space required by the server is $n+\Theta(n)$ bits.
In this paper, we investigate the possibility of ORAM constructions that need
only $n+o(n)$ bits of server space.
We call such ORAM constructions \emph{succinct}.
This space efficiency formalization is widely used in the field of
\emph{succinct data structures} and has proved to be useful to design
practically relevant space-efficient data structures in theoretically clean
ways.

The main difficulty to achieve succinctness is that most existing ORAM
construction approaches rely on the use of linear amount of "dummy" data.
The situation is similar to conventional hash tables, which need extra space
linear to the stored keys size.
Although it seems possible to reduce the constant factor of the extra space to
some extent, it is not at all trivial if one can achieve sublinear extra space
maintaining the state-of-the-art performance in other aspects such as access
bandwidth and user space usage.

\paragraph{Results.}

\cref{table: theoretical performance} shows the performance comparison of the
proposed methods and the existing methods.
Our first construction takes
$n(1+\Theta(\frac{\log{n}}{B}+\frac{g(n)}{f_1(n)/\log{n}}))$-bit server space
where $n$ is the database size, $f_1(\cdot)$ is an arbitrary function such that
$f_1(n)=\omega(\log{n})$ and $O(\log^2{n})$, $g(\cdot)$ is an arbitrary
function such that $g(n)=\omega(1)$ and $o(\sqrt{f_1(n)/\log{n}})$, and $B$ is
the size of a \emph{block}, the unit of communication between the user and the
server.
The bandwidth blowup is $O(\log^2{n})$ and the user space is $O(f_1(n))$
blocks.
Our second construction achieves
$n(1+\Theta(\frac{\log{n}}{B}+\frac{\log\log{n}}{f_2(n)}))$-bit server space,
$O(\log^2{n})$-bandwidth blowup and $O(f_2(n)+R(n))$-user space where
$f_2(\cdot)$ is an arbitrary function such that $f_2(n) = \omega(\log\log{n})$
and $O(\log^2{n})$, $R(\cdot)$ is an arbitrary function such that $R(n) =
\omega(\log{n})$.

For example, suppose $B = \lg^2{n}$, $R = \lg{n}\lg\lg{n}$, $f_1(n) = f_2(n) =
\lg{n}\lg\lg{n}$ and $g(n) = \lg\lg\lg{n}$.
Then, the user space of each of our first and the second constructions is
$O(\log{n}\log\log{n})$ and the server space is
$n(1+\Theta(\frac{\log\log\log{n}}{\log\log{n}}))$ (resp.
$n(1+\Theta(\frac{1}{\log{n}}))$) bits in the first (resp. second)
construction.

The second construction has better theoretical performance than the first one.
However, in practice, with some parameter settings, the first construction also
works comparably well as the second construction depending on which performance
measure one cares (See \cref{section: non-asymptotic}).
The first construction is also the basis of the second construction.

If $B = \omega(\log{n})$, Goldreich's construction~\cite{Goldreich87} and our
constructions are succinct.
(Each of these methods works as long as $B \ge c\lg{n}$ for $c$ around 3.)
The assumption $B = \omega(\log{n})$ is justified as follows.
Stefanov et al.~\cite{Stefanov13} mentioned that the typical block size is
64--256 KB (resp. from 128B to 4KB) in cloud computing scenario (resp. software
protection scenario).
Even $B \ge \lg^{1.5}{n}$ holds if $n \le 2^{6501}$ (resp. $n \le 2^{97}$) in
cloud computing (resp. software protection) scenario with moderate block size
of 64KB (resp. 128B).

We achieved exponentially smaller bandwidth blowup compared to Goldreich's
construction~\cite{Goldreich87}, which is the only preceding non-trivial
succinct ORAM construction.

The bandwidth blowup of our constructions is smaller or equal to other
non-succinct constructions except the construction of Kushilevitz et
al.~\cite{Kushilevitz12}, the Onion ORAM~\cite{Devadas16} and the so called SSS
construction~\cite{Stefanov12}.
The construction of Kushilevitz et al. (and every other construction that is
listed above it in \cref{table: theoretical performance}) is based on a very
expensive procedure called oblivious sorting and the constant factor hidden in
the asymptotic notation of the bandwidth cost is prohibitively large.
The Onion ORAM achieves $O(1)$-bandwidth blowup but it requires several
assumptions.
First, the Onion ORAM requires the server to perform some computation, e.g.,
homomorphic encryption evaluation.
(In every other construction in \cref{table: theoretical performance}, the
server suffices to respond to read/write requests.)
It also requires a computational assumption (decisional composite residuosity
assumption or learning with errors assumption), and larger block size
($B=\widetilde{\omega}(\log^2{n})$ to $\widetilde{\omega}(\log^6{n})$ depending
on the exact construction, where $\widetilde\omega(\cdot)$ hides a polyloglog
factor).
The SSS construction takes $cn$-bit user space where $c \ll 1$.
This method is effective for ordinary cloud computing setting but the user
space is too large for secure processor setting --- the PHANTOM-like
applications where server space efficiency is more important.

\begin{table}
  \centering
  \caption{Comparison of theoretical performances.
  Bandwidth blowup is the number of blocks required to be communicated for
  accessing one block of data.
  User space includes the temporary space needed during access procedures.
  $n$ is the database size in bits and $B$ is the block size in bits.
  $B$ must satisfy $B \ge c_1\lg{n}$ and $B = O(n^{c_2})$ for constants
  $c_1>1$, $0<c_2<1$.
  Typically, $c_1$ is around $3$.
  $f_1(\cdot)$ is an arbitrary function such that $f_1(n) = \omega(\log{n})$
  and $O(\log^2{n})$.
  $f_2(\cdot)$ is an arbitrary function such that $f_2(n) =
  \omega(\log\log{n})$ and $O(\log^2{n})$.
  $R(\cdot)$ is an arbitrary function such that $R(n) = \omega(\log{n})$.
  $g(\cdot)$ is an arbitrary function such that $g(n) = \omega(1)$ and
  $o(\sqrt{f_1(n)/\log{n}})$.
  Bounds with $\dagger$ are amortized.
  The method in~\cite{Devadas16} requires additional assumptions.
  The user space bound of the method in~\cite{Stefanov12} has a constant factor
  $\ll 1$.}
  \label{table: theoretical performance}
  \begin{tabular}{c | c c c}
    & Server space (\#bits)
    & \begin{tabular}[x]{@{}c@{}}Bandwidth\\blowup\end{tabular}
    & \begin{tabular}[x]{@{}c@{}}User space\\(\#block)\end{tabular} \\
  
    \hline
  
    Goldreich~\cite{Goldreich87} &
  	$n\parens{1+\Theta\parens{\frac{\log{n}}{B}+\frac{1}{\sqrt{n}}}}$ &
  	$O(\sqrt{n}\log{n})^\dagger$ &
    $O(1)$ \\
  
    Ostrovsky~\cite{Ostrovsky90} &
    $O(n\log{n})$ &
    $O(\log^3{n})^\dagger$ &
    $O(1)$ \\
  
    Ostrovsky, Shoup~\cite{Ostrovsky97} &
    $n(1+\Theta(1))$ &
  	$O(\sqrt{n}\log{n})$ &
    $O(1)$ \\
  
    Ostrovsky, Shoup~\cite{Ostrovsky97} &
    $O(n\log{n})$ &
    $O(\log^3{n})$ &
    $O(1)$ \\
  
    Goodrich, Mitzenmacher~\cite{Goodrich11} &
    $n(1+\Theta(1))$ &
    $O(\log^2{n})^\dagger$ &
    $O(1)$ \\
  
    Kushilevitz, Lu, Ostrovsky~\cite{Kushilevitz12} &
    $n(1+\Theta(1))$ &
    $O(\frac{\log^2{n}}{\log\log{n}})$ &
  	$O(1)$ \\
  
    Stefanov, Shi, Song~\cite{Stefanov12} &
    $n(1+\Theta(1))$ &
    $O(\log{n})$ &
    $O(n)$ \\
  
    Stefanov et al.~\cite{Stefanov13} &
    $n(1+\Theta(1))$ &
    $O(\log^2{n})$ &
    $O(R(n))$ \\
  
    Devadas et al.~\cite{Devadas16} &
    $n(1+\Theta(1))$ &
    $O(1)$ &
    $O(1)$ \\
  
    \hline
  
    Our result (\cref{main theorem}) &
    $n\parens{1+\Theta\parens{\frac{\log{n}}{B}+\frac{g(n)}{f_1(n)/\log{n}}}}$ &
  	$O(\log^2{n})$ &
    $O(f_1(n))$ \\
  
    Our result (\cref{second main theorem}) &
    $n\parens{1+\Theta\parens{\frac{\log{n}}{B}+\frac{\log\log{n}}{f_2(n)}}}$ &
  	$O(\log^2{n})$ &
    $O(f_2(n)+R(n))$
  \end{tabular}
\end{table}

\paragraph{Possible applications.}

There are several ORAM application scenarios with different requirements.
Our methods are particularly relevant to \emph{secure processor} scenario.
In this scenario, it is assumed that a special processor under the control of
the user is available in a remote server and the adversary cannot observe the
activities inside the processor.
The cloud service user sends a piece of code to the trusted processor, which,
in turn, executes the code on the server.
The communication between the cloud service user and the secure processor is
protected by private key encryption.
ORAM is implemented inside of the trusted processor using FPGA and it hides the
processor's access pattern to the main memory on the server.
After executing the code, the secure processor may return the (encrypted)
output to the cloud service user.
One of the main advantages of this approach over the conventional ORAM
application, in which the cloud service user locally executes ORAM, is that
ORAM bandwidth blowup applies to the relatively cheap processor--memory
communication rather than the costly over-network communication.
Note that, with the ORAM user-server terminology, the secure processor (resp.
the main memory) is the user (resp. the server).

In secure processor scenario,
\begin{itemize}[noitemsep]
  \item the user space is very limited, e.g., 6MB;
  \item The server usually does not perform complex computation;
  \item Simple ORAM algorithms are desirable for hardware implementation;
  \item The server space is much larger than the user space but there is some
  noticeable limit.
  The server can use disks if needed but it greatly slows down accesses.
\end{itemize}
In most existing secure processor systems, the Path ORAM~\cite{Stefanov13} or
its close variants are used~\cite{Fletcher12,Maas13,Ren13,Fletcher15}.
Indeed, the Path ORAM satisfies the first three requirements above.
However, it does not capture the last point.
For example, suppose 128GB database is stored in the Path ORAM.
If the block size is 128B, it takes about 10G blocks, i.e., 1.28TB (to ensure
rigorous security).
Then, each ORAM access procedure takes about 31$\mu$s assuming each memory
access takes 100ns.
If half of the 10G blocks are stored in the main memory and the other half is
stored in the disk, due to the randomized access pattern of the Path ORAM,
almost every ORAM access procedure ends up a disk seek, which takes
milliseconds order time.
In such cases, it is reasonable to use another ORAM construction that takes,
say, half the space of the Path ORAM even though it requires twice as many
memory accesses.

\paragraph{Tree-based ORAM.}

Our ORAM constructions are tree-based.
In a typical tree-based ORAM construction, $N$ blocks are stored in a complete
binary tree with $N$ leaves on the server.
Each node of the tree can store up to $Z$ blocks where $Z$ is a constant.
Each block is assigned a position label, which is an integer chosen uniformly
at random from $[N]$.
A block with position label $i$ must be stored at some node on the path from
the root to the $i$-th leaf.
This framework was introduced by Shi et al.~\cite{Shi11} and used in many
subsequent studies~\cite{Stefanov13, Gentry13, Ren13, Chung14, Devadas16}.

Consider a particular block $b$.
As the user continuously issues access requests, $b$ moves around the tree in
roughly the following manner.
First, when the user issues an access request to $b$, $b$ is picked out of the
tree and given a new uniformly random position label.
Then, $b$ is inserted into the tree from the root.
If the user issues an access request to another block, then, with some
probability, $b$ will move down the path to the leaf indicated by its position
label.
If the next node on the path is full, $b$ must wait for the blocks ``ahead'' to
move down.
If the pace at which the blocks move down the tree cannot keep up with the pace
at which blocks are picked out and reinserted from the root, then, some blocks
will not be able to reenter the tree.
If such ``congestion'' occurs, the user must maintain the overflown blocks
locally.

Note that most space in the tree is wasted: there are $2N-1$ nodes in the tree,
each with capacity $Z$, whereas there are only $N$ blocks.
Thus, to save server space, it is desirable to make the tree more compact, for
example, by reducing $Z$.
However, to maintain a low probability of ``congestion'', it is desirable to
make the tree larger, for example, by increasing $Z$.
To construct a succinct tree-based ORAM, we need to satisfy these conflicting
demands simultaneously.

\paragraph{Our ideas.}

One of our key ideas is the following two-stage tree layout.
We first change the tree to a complete binary tree with $N/\lg^{1.4}{N}$ leaves
(assume this is a power of 2).
In addition, we set the capacity of each leaf node to $\lg^{1.4}{N} +
\lg^{1.3}{N}$ while keeping the capacity of each internal node at $Z$.
The total size of the leaf nodes is then $N + N/\lg^{0.1}N$, and the total size
of all tree nodes except the leaves is $\Theta(N/\lg^{1.4}{N})$.
Thus, the total size of the entire tree is $N+o(N)$.
We choose each position label from $[N/\lg^{1.4}{N}]$.

To see why blocks can flow around in this tree without much congestion, suppose
that the user inserts each block directly into the leaf node pointed to by the
block's position label.
Clearly, the loads of leaves in this hypothetical setting dominates the loads
of leaves in the real setting.
Then, the situation would exactly be the same as the ``balls-into-bins''
game~\cite{MitzenmacherUpfal} with $N$ balls and $N/\lg^{1.4}{N}$ bins.
In particular, the number of blocks stored in each leaf node is
$\log^{1.2}{N}+\Theta(\log^{0.6}{N})$ with high probability.
Thus, every leaf node has sufficient capacity to store all of its assigned
blocks.

Furthermore, the blocks in the internal nodes flow as smoothly as in the
original non-succinct ORAM construction since we did not modify that part.
Therefore, the blocks flow without much congestion throughout the tree.
This is the idea behind the first construction (\cref{main theorem}).

Another key idea follows naturally from the above argument, specifically from
the connection to the balls-into-bins game.
A remarkable phenomenon known as ``the power of two choices'' states that, in
the balls-into-bins game, if one chooses two bins uniformly and independently
for each ball, and throws the ball into the least loaded bin, the bin loads
will be distributed much more tightly around the mean than they are in the
one-choice game~\cite{Azar99,Berenbrink00,MitzenmacherUpfal}.
The maximum bin load corresponds to the leaf node size in tree-based ORAM
constructions.
Thus, the size of the tree can be further decreased by using the two-choice
strategy to assign the position labels.
This is the idea behind the second construction (\cref{second main theorem}).

We note that the current paper is the first to apply the power of two choices
to tree-based ORAM.
(Some non-tree-based constructions~\cite{Pinkas10, Goodrich11, Kushilevitz12}
use the two choices idea in the form of cuckoo hashing~\cite{Pagh04}.)
Moreover, the resulting algorithms keep the simplicity of the Path
ORAM~\cite{Stefanov13}, which is a highly valuable asset in the relevant
application scenario as mentioned above.
As for the analysis, the existing stash size analyses~\cite{Stefanov13,Ren13}
do not seem to work with parameter regimes required for succinctness.
We will give a different proof route (though it still heavily borrows
from~\cite{Stefanov13,Ren13}).

\paragraph{Our contributions.}

Our contributions in the current paper are as follows:
\begin{itemize}[noitemsep]
  \item We introduce the notion of succinct oblivious RAM.
  This is a promising first step to systematically design ORAM constructions
  with small server space usage;
  \item We propose two succinct ORAM constructions.
  Not only being succinct, these constructions exhibit state-of-the-art
  performance in terms of the bandwidth blowup.
  The methods are simple and easy to implement;
  \item We also give non-asymptotic bounds and simulation results which
  indicate that the proposed methods are practically effective.
\end{itemize}

\subsection{Related Work}

In the field of succinct data structures~\cite{Jacobson88,Jacobson89}, the goal
is to represent an object such as a string~\cite{Munro01-1, Sadakane02,
Grossi03, Ferragina05-1, Golynski06, Sadakane06, Jansson07, Ferragina07, Hon14,
Navarro14-2} or a tree~\cite{Clark96, Munro01-2, Raman03, Benoit05,
Ferragina05-2, Navarro14} in such a way that a) only $OPT+o(OPT)$ bits are
required, and b) relevant queries such as random access or substring search are
efficiently supported.
Here, $OPT$ is the information theoretic optimum, i.e., the minimum number of
bits needed to represent the object.

The current study is related to succinct data structures in the following way.
Suppose a remote server hosts a database that is implemented by a succinct data
structure, and a user wishes to access the database without revealing the
access pattern to the server.
The user, of course, can apply any existing ORAM constructions.
However, if ORAM increases the database size by some constant factor, it
destroys the $OPT+o(OPT)$ bound guaranteed by the succinct data structure.
One can apply the succinct ORAM constructions proposed in this paper to hide
succinct data structure access pattern on a remote storage device without
harming the theoretical guarantee on the data structure size.

\subsection{Organization of the Paper}

In \cref{section: prelim}, we introduce basic notions that will be used in
later sections.
We describe our first succinct ORAM construction (encapsulated in \cref{main
theorem}) in \cref{section: succinct oram} and the second construction
(encapsulated in \cref{second main theorem}) in \cref{section: succincter
oram}.
Then, we present non-asymptotic analyses and simulation results in
\cref{section: non-asymptotic}.
We conclude the paper in \cref{section: conclusion}.

\section{Preliminaries} \label{section: prelim}

\subsection{Notations}

We denote the set $\{0, 1, \dots n-1\}$ as $[n]$ for a non-negative integer
$n$.
We write $\lg{x}$ to denote the base-$2$ logarithm of $x$ and $\ln{x}$ to
denote the natural logarithm of $x$.
We write $\log{x}$ to denote the logarithm of $x$ in the context where the base
can be any positive constant.
We write $\poly(n)$ to denote $n^c$ for some constant $c>0$.
A negligible function of $n$ is defined to be a function that is asymptotically
smaller than $1/n^c$ for any constant $c>0$.

\subsection{Oblivious RAM} \label{subsec: ORAM}

\paragraph{Definition.}

Oblivious RAM is defined through the interaction between three parties the
\emph{user}, the \emph{server} and the \emph{oblivious RAM (ORAM) simulator}.
The user wishes to perform random access to the database on the server without
revealing the ``access pattern'' to the server.
Roughly speaking, the ORAM simulator works as a mediator between the user and
the server.
It takes access requests to the database from the user and translates them to
``appropriate'' access requests to the server.
The database on the server can be maintained as some ``data structure'' instead
of the raw form on which the user intends to perform random access and thus,
the access requests to the server need not be (and are not) the same as the
access requests to the database.
The ORAM simulator then, performs random accesses to the server on behalf of
the user (using translated requests), thereby making it impossible for the
server to infer the access patterns to the database even though the accesses to
the server is visible from the server.

Formally, let each of $B$ and $n$ be a positive integer and $N := \ceil{n/B}$.
The value $B$ models the unit of communication and $n$ models the database
size.
We call a chunk of $B$ bits a \emph{block}.
For brevity, we assume $n$ is a multiple of $B$ in the rest of the paper.
A \emph{logical (resp. physical) access request} is a triplet $(\mathrm{op},
\mathrm{addr}, \mathrm{val})$, where $\mathrm{op} \in
\{\mathrm{read},\mathrm{write}\}$, $\mathrm{addr} \in [N]$ (resp.
$\mathrm{addr} \in \N$), $\mathrm{val} \in \zo^B$.
The user sends logical access requests to the ORAM simulator and receives a
block for each request.
The server receives physical access requests from the ORAM simulator and
returns a block for each request in the following way: for $(\mathrm{read}, i,
v)$, the server returns $v$ of the most recent request $(\mathrm{write}, i,
v)$.
The ORAM simulator takes a sequence of logical access requests from the user
and for each logical access request, it makes a sequence of physical access
requests to the server receiving a returned block for each of them, and returns
a block to the user.
The ORAM simulator is possibly stateful and probabilistic.
It must respond to logical access requests online and must satisfy the
following conditions:

\begin{description}
  \item[Correctness] The ORAM simulator is correct if and only if, for a
  logical access request with $\mathrm{addr}=i$, it returns $v$ of the previous
  and most recent logical access request $(\mathrm{write},i,v)$;\footnotemark
  \footnotetext{We use the convention that not only read but also write
  requests have return values.}
  \item[Security] The ORAM simulator is computationally (resp. information
  theoretically) secure if and only if, for any logical access request
  sequences of the same length, the distributions of the $\mathrm{addr}$ values
  of the resulting physical access requests are computationally (resp.
  information theoretically) indistinguishable.
\end{description}

An ORAM construction is an ORAM simulator implementation.
We have distinguished the user from the ORAM simulator for exposition but in
practice, an ORAM simulator is a program run by the user.
Thus, we do not distinguish them in the rest of the paper.

\paragraph{Encryption.}

In the ORAM constructions considered in this paper, the user holds a symmetric
cipher key and every block is encrypted when it is stored on the server.
Encryption can increase the database size.
Theoretically, we can bound the space overhead due to encryption to
$o(1)$-factor.
For example, one can encrypt a block $m$ as $(r,m \oplus F(r))$ where $r$ is a
random bits of size $\omega(\log{n})$ and $o(B)$, $F$ is a pseudorandom
function (key is omitted) and $\oplus$ denotes bitwise XOR.
Or, in practice, one can use ``counter mode'' of block cipher, i.e., encrypting
a block $m$ as $(i,m \oplus F(z||i))$ where $F$ is AES, $i$ is the number of
blocks encrypted so far and $z$ is a nonce.
Assuming that we allocate 128 bits to $i$ and the typical block sizes mentioned
in \cref{section: intro}, the additional space is 1/4096--1/16384 (resp.
1/8--1/256) factor of the original database size in cloud computing (resp.
software protection) scenario.
Since the space overhead due to encryption is rather small, we ignore it in the
rest of the paper.

\paragraph{Performance measures.}

The most popular ORAM performance measures include the amount of the space
required by the user/server and the amount of time required for each logical
access.

In most ORAM constructions, the user needs to maintain a small amount of
information locally.
In addition to this, in some constructions, the user temporarily needs to store
more information during the access procedure.
We refer the amount of the space the user temporarily needs during access
procedure as \emph{temporary space usage} and the amount of the space the user
needs even if no access is made as \emph{permanent space usage}.

In this paper, we pay special attention to the server space usage.
In particular, we use the following notion of \emph{succinctness} as a
criterion for ORAM server-space efficiency:
\begin{definition}
If the server space usage of an ORAM construction representing an $n$-bit
database is $n + o(n)$ bits, the ORAM construction is said to be succinct.
\end{definition}

As for the access efficiency, following the previous studies, we use the amount
of communication between the user and the server as a proxy for the access
time.
We define the \emph{bandwidth blowup} of an ORAM construction to be the number
of blocks that needs to be communicated between the user and the server per
logical access.
In other words, the bandwidth blowup is the ratio of communication amount
needed for secure access to communication amount needed for ordinary (insecure)
access.\footnotemark
\footnotetext{The bandwidth blowup is a ratio and does not have a unit.}

\paragraph{Asymptotic behavior of parameters.}

Among the ORAM-related parameters, the original database size $n$ and block
size $B$ are outside of the user's control.
Other parameters, e.g., the metadata size, can be chosen by the user.
We assume that $B$ is a function of $n$ satisfying $B = \omega(\log{n})$.
(See \cref{section: intro} for the justification.)
Thus, after all, $n$ is the only free parameter on which the other parameters
depend.
In all asymptotic statements in this paper, the limit is taken as $n \to
\infty$.

\subsection{Sub-ORAM}

We use an ORAM construction encapsulated into the following proposition as a
blackbox.
Concretely, the Path ORAM~\cite{Stefanov13} suffices.

\begin{proposition} \label{prop: Path ORAM}
Let $n$ be the database size and $B$ be the block size, both in bits.
If $B \ge 3\lg{n}$ and $B = O(n^c)$ for some $0<c<1$, there exists an
information theoretically secure ORAM construction such that
i) the server's space usage is
\begin{equation*}
  n \parens{10 + \Theta\parens{\frac{\log{n}}{B}}} \text{ bits;\footnotemark}
\end{equation*}
\footnotetext{The description of the original paper depends on the assumption
that $N$ is a power of two. If this assumption is not true and we pad the
database to make $N$ a power of two, the factor 10 in the server space bound
becomes 20.}
ii) the worst-case bandwidth blowup is $O(\log^2{n})$;
iii) the user's temporary space usage is $O(\log{n})$ blocks; and
iv) for any $R = \omega(\log{n})$, the probability that the user's permanent
space usage becomes larger than $R$ blocks during $\poly(n)$ logical accesses
is negligible.
\end{proposition}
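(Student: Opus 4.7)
The plan is to instantiate the proposition by the Path ORAM construction of Stefanov et al.\ and verify each of the four bullets. Accordingly, I first recall the construction at a high level: the server stores a complete binary tree with $N = n/B$ leaves, every node being a bucket of capacity $Z$ (for a suitable constant, taking $Z=5$ to land on the stated constant $10$); each block carries a position label chosen uniformly from $[N]$ and may reside anywhere on the root-to-leaf path indexed by that label; a position map maintained on the user side tracks current labels, and eviction happens by reading a full path into a local stash, remapping the requested block, and writing the path back while pushing blocks as deep as allowed. Since the user is only permitted $O(\log n)$ blocks of storage, the position map itself cannot fit client-side, so I would recurse: store the position map as a second Path ORAM of size $n/\Theta(B/\log n)$, whose position map is in turn stored by a third, and so on, for $O(\log n)$ levels until the residual map fits in $O(\log n)$ blocks.

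For (i), counting storage on the server: the tree has $2N-1$ buckets, each containing $Z=5$ payload blocks of $B$ bits, contributing $\le 2NZB = 10n$ bits when $N$ is a power of two (the footnote explains the $20n$ degradation if padding is needed). In addition, each of the $\Theta(N)$ blocks needs a small header (validity flag plus position label), of size $O(\log n)$ bits, contributing $O(N\log n) = O(n\log n / B)$ bits overall. The recursive position-map ORAMs contribute a geometric series dominated by the top level, so they are absorbed into the same asymptotics. Summing yields the stated $n(10+\Theta(\log n /B))$.

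For (ii) and (iii), a single access at one level reads and rewrites an entire root-to-leaf path, which is $O(Z \log N) = O(\log n)$ blocks of bandwidth, and the temporary storage is bounded by the path length plus the (small) stash, i.e.\ $O(\log n)$ blocks. Chaining through the $O(\log n)$ recursive levels inflates the worst-case bandwidth to $O(\log^2 n)$ but does not blow up the temporary space, since the recursion can be processed one level at a time, releasing the path buffer of the enclosing level before descending. This gives (ii) and (iii).

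The main nontrivial point is (iv), the stash-overflow bound, and I would import it rather than reprove it. The argument of Stefanov et al.\ (with the sharpened analysis of Ren et al.) shows that, for $Z\ge 5$, the stationary distribution of the stash occupancy has an exponential tail: $\Pr[\text{stash size} > R]$ decays as $e^{-\Omega(R)}$. A union bound over $\poly(n)$ accesses therefore makes the probability that the stash ever exceeds $R$ blocks negligible whenever $R = \omega(\log n)$, which is exactly the claim. Since each recursive level satisfies the same bound and there are $O(\log n)$ levels, a further union bound preserves negligibility. The potential obstacle is that the stash argument assumes the position labels behave as fresh uniform draws between accesses; this is ensured by the remapping step at the start of each access, so the hypothesis applies to every level of the recursion and the proposition follows.
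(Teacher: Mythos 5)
The paper itself contains no proof of this proposition: it is imported as a blackbox, with the one-line remark that the Path ORAM of Stefanov et al.~\cite{Stefanov13} suffices (plus the footnote about padding $N$ to a power of two). Your route --- instantiating Path ORAM with $Z=5$ and the standard position-map recursion --- is therefore exactly the intended one, and your accounting for (i)--(iii) is sound: $(2N-1)ZB \le 10n$ bits for the tree, $\Theta(N\log n)=\Theta(n\log n/B)$ bits of per-block headers, a geometric series of recursive position-map ORAMs whose top level is $\Theta(n\log n/B)$ bits (note that with $B\ge 3\lg n$ the per-level compression factor is only a constant, so the recursion depth is $\Theta(\log n)$, and the worst-case $O(\log^2 n)$ bandwidth and the level-at-a-time argument for $O(\log n)$ temporary space are both correct).

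There is, however, a genuine gap in your argument for (iv). The permanent user space is the \emph{sum} of the stashes of all $K=\Theta(\log n)$ recursion levels, and ``a further union bound'' over levels cannot deliver the claim as stated. The per-level tail is $\Pr[\mathrm{stash}_i > r]\le 14\cdot(0.6002)^r$. If you give each level the budget $r=R/K$, then $R=\omega(\log n)$ only yields $r=\omega(1)$, so the per-level failure probability is $(0.6)^{\omega(1)}$, which is $o(1)$ but not negligible --- for instance $R=\log n\log\log n$ gives per-level failure $(\log n)^{-\Theta(1)}$, and the union over levels is nowhere near $n^{-\omega(1)}$. If instead you give each level the full budget $R$, the union bound only shows the total is at most $KR=\Theta(R\log n)$, which proves (iv) only for $R=\omega(\log^2 n)$. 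The fix is not hard but needs an extra idea: the address sequence fed to level $i+1$ is a deterministic function of the logical access sequence, so the stash sizes $X_1,\dots,X_K$ at distinct levels depend on disjoint, independent position-label randomness and are mutually independent. The uniform exponential tail gives $E[e^{tX_i}]\le C$ for a constant $t>0$ with $e^t\cdot 0.6002<1$, whence $\Pr[\sum_i X_i > R]\le C^K e^{-tR} = e^{O(\log n)-\Omega(R)} = n^{-\omega(1)}$ for any $R=\omega(\log n)$; a union bound over the $\poly(n)$ access times then finishes (iv). (Two cosmetic points: the stash bound of~\cite{Stefanov13} is a worst-case bound after any fixed access sequence, obtained via their Lemma~3-type reduction, not a stationary-distribution statement; and security should be noted to be information theoretic, which holds because every physical address at every level is a freshly uniform path.)
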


\section{Succinct ORAM Construction} \label{section: succinct oram}

In this section, we prove the following theorem.

\begin{theorem} \label{main theorem}
Let $n$ be the database size and $B$ be the block size, both in bits.
If $B \ge 3\lg{n}$ and $B = O(n^c)$ for some constant $0<c<1$, then for any
$f:\N\to\R$ such that $f(n) = \omega(\log{n})$ and $f(n) = O(\log^2{n})$ and
any $g:\N\to\R$ such that $g(n) = \omega(1)$ and $g(n) =
o(\sqrt{f(n)/\log{n}})$, there exists an information theoretically secure
ORAM construction such that
i) the server's space usage is bounded by
\[
	n \parens{1 + \Theta\parens{\frac{\log{n}}{B} +
									\frac{g(n)}{\sqrt{f(n)/\log{n}}}}} \text{ bits;}
\]
ii) the worst case bandwidth blowup is $O(\log^2{n})$;
iii) the user's temporary space usage is $O(f(n))$ blocks; and
iv) for any $R = \omega(\log{n})$, the probability that the user's permanent
space usage becomes larger than $R$ blocks during $\poly(n)$ logical accesses
is negligible.
\end{theorem}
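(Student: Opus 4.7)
The plan is to implement the two-stage tree layout sketched in the introduction. Set $L := \lceil f(n)/c_0 \rceil$ for a constant $c_0$ large enough to absorb additive overheads, and build a complete binary tree over $M := \lceil N/L \rceil$ leaves (padding the database if needed, which costs only an additive $O(L)$ in block count). Every internal node has a fixed constant capacity $Z$; every leaf has capacity $L + c\,g(n)\sqrt{L\lg n}$ for a sufficiently large constant $c$. Each data block is stamped with a position label drawn i.i.d.\ uniformly from $[M]$, and the invariant is maintained that a block with label $\ell$ lies somewhere on the root-to-$\ell$ path. The position map is stored obliviously in a recursive instance of \cref{prop: Path ORAM} on an input of size $\Theta(N \lg M)$ bits. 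An access proceeds exactly as in Path ORAM: recursively look up and re-randomize the label, read the entire target path from the server, serve the logical request, then greedily push each block as deep along the path as the invariant allows before writing the path back; blocks that do not fit anywhere on the path remain in a local stash.

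Security, correctness, and the bandwidth and temporary-space bounds follow the Path ORAM template directly. Obliviousness reduces to the fact that the physical access pattern is a deterministic function of a sequence of fresh uniform labels (combined with obliviousness of the recursive sub-ORAM). A single access touches one path, consisting of $\lg M$ internal nodes of constant capacity plus one leaf of capacity $\Theta(L)$, for $O(L) = O(f(n))$ blocks; together with the $O(\log n)$-deep recursion for the position map, this is $O(\log^2 n)$ blocks, establishing (ii) and (iii). For (i), the tree holds $(M-1)Z + M(L + c\,g(n)\sqrt{L\lg n}) = N\bigl(1 + \Theta(1/L) + \Theta(g(n)\sqrt{\lg n / L})\bigr)$ blocks; multiplying by $B$ and absorbing $\Theta(1/L)$ into the dominant $g(n)\sqrt{\lg n/L}$ term (using $g(n)=\omega(1)$) yields $n\bigl(1 + \Theta(g(n)/\sqrt{f(n)/\lg n})\bigr)$ bits of payload. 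Per-block metadata of $O(\lg n)$ bits and the position-map sub-ORAM each contribute $O(n\lg n /B)$ bits by \cref{prop: Path ORAM}, accounting for the $\lg n/B$ term.

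The substantive work is (iv). The existing Path ORAM stash analyses do not apply directly because the leaf layer now has non-constant capacity, and the target bound $R = \omega(\log n)$ is sharper than what those proofs were calibrated for. My plan is to decouple \emph{leaf overflow} from \emph{internal congestion} and bound each separately. For leaf overflow, I use the coupling from the introduction: since the live labels are fresh uniform samples from $[M]$, the assignment of the $N$ active blocks to leaves is distributed as $N$ balls into $M$ bins; a Chernoff bound gives that any bin exceeds $L + c\,g(n)\sqrt{L\lg n}$ with probability $\exp(-\Omega(c^2 g(n)^2 \lg n))$, and since $g(n)=\omega(1)$, a union bound over $M \le N$ leaves and over $\poly(n)$ accesses keeps this failure probability negligible. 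For internal congestion, the internal sub-tree is structurally a vanilla tree-based ORAM over $M$ leaves with uniform constant capacity $Z$, into which blocks arrive with uniform labels in $[M]$; for $Z$ a sufficiently large constant, the stash-reduces argument of Stefanov et al.\ and Ren et al., transplanted to this sub-tree with leaves treated as infinite sinks, bounds the internal stash by $R$ with probability $2^{-\Omega(R)}$ per snapshot.

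The main obstacle I expect is making the coupling rigorous despite the correlations that eviction induces between the live labels and the tree state: one cannot literally treat the current labels as independent of the current configuration. I plan to handle this in the standard way by replacing the ``current'' labels with the ``next'' (freshly drawn) labels assigned at each access, as in the Ren et al.\ stash analysis, at the cost of a small slack absorbed by taking $c$ and $Z$ sufficiently large constants. Combining the two bounds, the probability that the total overflow (leaf plus internal stash) exceeds $R$ over a $\poly(n)$-length access sequence is negligible, establishing (iv) and completing the proof.
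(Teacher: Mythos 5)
Your construction and overall plan coincide with the paper's: a shallower tree with $\Theta(N/f(n))$ leaves of capacity $f(n)+\Theta(g(n)\sqrt{f(n)\log n})$, constant-capacity internal buckets, the position map outsourced to the sub-ORAM of \cref{prop: Path ORAM}, and a stash analysis that is decomposed into (a) leaf overflow, bounded by a Chernoff/balls-into-bins argument on the worst-case one-access-per-address sequence (this is exactly the paper's \cref{balls-into-bins bound}, with the reduction to that sequence licensed by \cref{Path ORAM Lemma 3}, which your ``fresh labels'' device is implicitly re-deriving), and (b) internal congestion, bounded by a union over internal subtrees with a moment-generating-function tail. Your space, bandwidth, temporary-space, and security arguments for (i)--(iii) are correct and match the paper's.

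The genuine gap is in (iv), and it sits precisely where the paper warns that ``the existing stash size analyses do not seem to work with parameter regimes required for succinctness.'' You evict greedily along the just-read path, Path ORAM style, and then assert that the argument of Stefanov et al.\ and Ren et al.\ transplants. But the paper's internal-congestion bound rests on two facts that are specific to the deterministic \textsc{BitReversal}-scheduled \textsc{EvictPath} it uses instead: first, \cref{Ring ORAM Lemma 3} ($E[X(T)]\le n(T)/2$ for internal subtrees of $\mathrm{ORAM}_\infty$) exploits that each depth-$i$ bucket lies on an eviction path exactly every $2^i$ accesses; second, because the eviction schedule is deterministic and data-independent, whether the $j$-th accessed block resides in a fixed internal subtree of $S_\infty$ is a function of that block's own label alone, so $\{X_j(T)\}_{j\in[N]}$ are mutually independent --- the property that the product step in \eqref{stash bound 3}, and hence \cref{stash bound 4}, requires. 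Under accessed-path eviction the eviction paths \emph{are} the labels of the other accessed blocks, so this independence fails and no analogue of \cref{Ring ORAM Lemma 3} is available; your remaining fallback, the original Path ORAM analysis (which does handle accessed-path eviction), is calibrated to a tree with roughly as many leaves as blocks and does not obviously survive the succinct regime where $N$ blocks load a tree with only $N/f(n)$ leaves --- taking $Z$ ``a sufficiently large constant'' repairs neither defect, and your plan offers no substitute argument. The fix is simply to adopt the bit-reversal eviction schedule of \cref{algorithm: subroutine} (as the paper does, crediting Gentry et al.); with that change your outline for (iv) goes through essentially as in the paper.
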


\begin{corollary}
If, in addition to the conditions of \cref{main theorem}, $B=\omega(\log{n})$,
then, the ORAM construction of \cref{main theorem} is succinct.
\end{corollary}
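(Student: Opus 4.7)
The plan is simply to instantiate the bound from \cref{main theorem} under the extra assumption $B = \omega(\log n)$ and verify that the overhead term inside the $\Theta(\cdot)$ is $o(1)$, since that is exactly what is required by the definition of succinctness ($n + o(n)$ server bits).

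First, I would invoke \cref{main theorem} to obtain an information theoretically secure ORAM construction whose server space is bounded by
\[
  n\left(1 + \Theta\left(\frac{\log n}{B} + \frac{g(n)}{\sqrt{f(n)/\log n}}\right)\right) \text{ bits},
\]
for any admissible $f$ and $g$. It then suffices to show that, under the hypothesis $B = \omega(\log n)$, the quantity inside the $\Theta(\cdot)$ is $o(1)$ as $n \to \infty$.

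I would handle the two summands separately. For the first, the hypothesis $B = \omega(\log n)$ is exactly the statement that $\log n / B = o(1)$. For the second, the assumption on $g$ from the statement of \cref{main theorem}, namely $g(n) = o\bigl(\sqrt{f(n)/\log n}\bigr)$, is precisely equivalent to $g(n)/\sqrt{f(n)/\log n} = o(1)$. Summing the two $o(1)$ bounds gives an $o(1)$ overhead, so the server space is $n(1 + o(1)) = n + o(n)$ bits.

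There is no real obstacle here: the corollary is a direct reading of the server-space bound of \cref{main theorem} against the \emph{succinct} definition, once the extra hypothesis $B = \omega(\log n)$ is added to kill the $\log n / B$ term that was not already controlled by the theorem's own hypotheses on $f$ and $g$. All other guarantees of the theorem (bandwidth blowup, user space, security) are inherited unchanged, so the resulting construction is indeed a succinct ORAM.
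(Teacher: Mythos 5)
Your proposal is correct and matches the paper's treatment: the paper states this corollary without proof precisely because it follows immediately by reading the server-space bound of \cref{main theorem} against the definition of succinctness, with $B=\omega(\log n)$ giving $\frac{\log n}{B}=o(1)$ and the hypothesis $g(n)=o\bigl(\sqrt{f(n)/\log n}\bigr)$ giving $\frac{g(n)}{\sqrt{f(n)/\log n}}=o(1)$. Your two-summand verification is exactly this intended argument, so nothing further is needed.
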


In the remainder of this section, $n, B, c, f(\cdot), g(\cdot)$ are as
described in \cref{main theorem}.

\subsection{Description}

For the clarity of explanation, we first describe a simplified ORAM construction
where the user needs to maintain a large amount of information locally.
Then, we obtain an ORAM construction with the claimed bounds by slightly
modifying the simplified construction.

As we mentioned in \cref{section: intro}, in a tree-based ORAM construction,
blocks on the server are stored in the nodes of a complete binary tree.
The key point of the method in this section is the choice of the tree height
$L$ and the leaf node capacity $M$.
Specifically, in the rest of this section, let
\[
  L := \ceil{\lg{\frac{N}{f(n)}}} \quad\text{and}\quad
  M := \ceil{\frac{N}{2^L} + g(n)\sqrt{\frac{NL}{2^L}}}
\]
where $N = n/B$.
We assume, for brevity, that each of $\lg{\frac{N}{f(n)}}$ and $\frac{N}{2^L} +
g(n)\sqrt{\frac{NL}{2^L}}$ is an integer.

\paragraph{Block usage.}

The ORAM is supposed to provide the user with an interface to access the
database as if it is stored in array $A$ of $B$-bit blocks (\cref{subsec:
ORAM}).
We use blocks as follows :
\begin{itemize}[noitemsep]
  \item Each block is either a \emph{data block} or a \emph{metadata block};
  \item Each data block is either a \emph{real block} or a \emph{dummy block}.
  A real block contains an entry of $A$.
  A dummy block does not contain any information on the database contents and
  is used only to hide the access pattern;
  \item Each real block is given a \emph{position label}, a value in $[2^L]$;
  \item A metadata block contains the metadata of several data blocks.
  For each data block, its metadata consists of
  \begin{description}[align=right,labelwidth=0.7cm,noitemsep]
    \item[\textmd{\textsf{type}:}] A flag indicating whether the block is real
    or dummy;
    \item[\textmd{\textsf{addr}:}] If the block is real and represents $A[i]$,
    the value of \textsf{addr} is $i$.
    If the block is a dummy, the value is arbitrary;
    \item[\textmd{\textsf{pos}:}] If the block is real with position label $i$,
    the value of \textsf{pos} is $i$.
    If the block is a dummy, the value is arbitrary.
  \end{description}
\end{itemize}

\paragraph{Data layout.}

The server maintains a tree containing data blocks, which we call \emph{data
tree}, and another tree containing metadata blocks, which we call
\emph{metadata tree}.
The data tree is used in such a way that at each point of time, it contains
most real blocks with high probability.
The user maintains \emph{stash}, which contains the real blocks that are not in
the data tree, and \emph{position table}, which contains the position labels of
all real blocks.
Below, we explain each of them more in detail.

The data tree is a complete binary tree with $2^L$ leaves.
Each node of the tree is a \emph{bucket}, which is a container that can
accommodate a certain number of blocks.
We call the buckets corresponding to the internal nodes as \emph{internal
buckets} and the buckets corresponding to the leaf nodes as \emph{leaf
buckets}.
The size of each internal bucket is $Z$ (blocks) while the size of each leaf
bucket is $M$ (blocks).
We will determine $Z$ to be 3 in \cref{subsec: succ oram user space} but for
now, we consider it as an arbitrary constant.
The data tree is represented as the bitstring derived by concatenating all
buckets in breadth first order.
As is well-known, with this representation, given an index of a node, the index
of the parent or left/right child can be derived by simple arithmetic.
The total space usage of the data tree is equal to the sum of the bucket sizes.

The metadata tree is also a complete binary tree with $2^L$ leaves.
Each node of the tree is the metadata of the data blocks in the corresponding
bucket of the data tree.
The metadata tree is represented similarly to the data tree but there is a
subtlety.
If the metadata of the blocks in a bucket has a size smaller than $B$, it is
wasteful to allocate one full block for them.
To avoid this waste, we represent metadata tree as the bitstring derived by
concatenating the metadata of all data blocks in the data tree in breadth first
order.
The space usage of the metadata tree is equal to the sum of all metadata of all
data blocks.

Each real block in the stash is maintained with its \textsf{addr} and
\textsf{pos}.
The stash can be any linear-space data structure that efficiently supports
insertion, deletion and range query by \textsf{pos}, e.g., a self balancing
binary search tree.

The position table stores the position label of the real block storing $A[i]$
in the $i$-th entry.

\paragraph{Access procedure.}

Access requests are processed in such a way that the following invariant
conditions are always satisfied:
\begin{itemize}[noitemsep]
  \item Each real block is stored either in the data tree or in the stash;
  \item If a real block with position label $\ell$ is stored in the data tree,
  it is in the bucket on the path from the root to the $\ell$-th leaf.
\end{itemize}

\begin{table}
  \centering
  \caption{The notations for access procedure}
  \label{table: notations}
  \begin{tabular}{r | l}
    $\mathrm{Pos}$
      & the position table \\
    $P(\ell)$
      & the path from the root to the $\ell$-th leaf of the data tree \\
    $P(\ell,i)$
      & the depth $i$ bucket on $P(\ell)$ (the root is at depth 0) \\
    $P(\ell,i,j)$
      & the $j$-th block in $P(\ell,i)$ (counted from one) \\
    $\mathrm{meta}[P(\ell,i)]$
      & the metadata of the blocks in $P(\ell,i)$ \\
    $|P(\ell,i)|$
      & the number of blocks in $P(\ell,i)$ ($|P(\ell,i)|=Z$ for $i < L$ and $|P(\ell,L)|=M$) \\
    $\mathrm{md}[i]$
      & the $i$-th metadata in $\mathrm{md}$ (if $\mathrm{md} = md[P(\ell,i)]$, the metadata of $P(\ell,i,j)$) \\
    $\textsc{Random}(b)$
      & returns a uniformly random $b$-bit integer \\
    $\textsc{BitReversal}(\ell)$
      & returns the $L$-bit integer derived by reversing the bits of $L$-bit integer $\ell$ \\
    $G$
      & a persistent/global variable storing the number of \textsc{Access} called so far
  \end{tabular}
\end{table}

The main routine of the access procedure is described in \cref{algorithm: main}
and the subroutines for \textsc{Access} are described in \cref{algorithm:
subroutine}.
The notations used in the access procedure are summarized in \cref{table:
notations}.\footnotemark
\footnotetext{We note that the pseudocode and notations borrow much from
existing work~\cite{Stefanov13, Ren15}.}
We use $\cdot$ to denote an arbitrary value.
For example, the metadata $(\mathrm{dummy},\cdot,\cdot)$ means any metadata
with $\mathsf{type} = \mathrm{dummy}$ (\textsf{addr} and \textsf{pos} are
arbitrary).
Though the encryption/decryption are omitted from the pseudocode for brevity,
everything on the server needs to be encrypted.
For example, in the step ``$\mathrm{md} \leftarrow \mathrm{meta}[P(\ell,i)]$'',
the user retrieves the ciphertext of $\mathrm{meta}[P(\ell,i)]$, decrypts it
and save it in the variable $\mathrm{md}$.
For brevity, we assume that every block is already initialized, i.e., each real
block is assigned a valid value with the metadata stored in the corresponding
node of the metadata tree and the position table contains the correct position
labels.

Let $b_a$ be the accessed block.
We first read the position label $\ell$ of $b_a$ from the position table and
update the position table entry to a number chosen uniformly at random from
$[L]$ (line 2--4), which will become the new position label of $b_a$ after the
access operation is finished.
By the invariant conditions above, $b_a$ is either in the stash or $P(\ell)$.
We scan $P(\ell)$ and retrieve $b_a$ if it is in $P(\ell)$ (\textsc{ReadPath}
operation in line 5).
If $b_a$ was not in $P(\ell)$, we retrieve it from the stash (line 6--9).
If the current request is a write request, we update the block contents to the
new value (line 11--12).
Then, we insert $b_a$ with the updated position label and the possibly updated
value into the stash (line 13).
After that, we perform \textsc{EvictPath} operation (line 14).
The purpose of this operation is a) to move back the blocks in the stash into
the tree and b) to move the real blocks in the tree downwards (far from the
root).
To do this, \textsc{EvictPath} retrieves all real blocks in the path
$P(\textsc{BitReversal(G)})$ (to be explained shortly) into the stash and then,
going up $P(\textsc{BitReversal(G)})$ from leaf to the root, tries to move as
many blocks in the stash into the buckets on the path.
If some blocks are left in the stash after \textsc{EvictPath}, the user keeps
them charging the permanent space usage.
Lastly, the value stored at $b_a$ is returned (line 15).

The function $\textsc{BitReversal}(\cdot)$ takes an $L$-bit integer $x$ and
returns the bit reversed version of $x$ while $G$ is the number of
\textsc{Access} operations called so far (modulo $2^L$).
Thus, if $L=8$ for example, $G$ cycles as $0,1,2,3,4,5,6,7,0,1,2,\dots$ as
$\textsc{Access}$ is called successively.
Then, $\textsc{BitReversal}(G)$ cycles as $0,4,2,6,1,5,3,7,0,4,\dots$.
The advantage of this \textsc{EvictPath} scheduling is that the eviction paths
(paths on which \textsc{EvictPath} is called) are distributed evenly, that is,
each of the $2^i$ nodes at depth $i$ is on the eviction path every $2^i$
$\textsc{Access}$ operations.
This \textsc{BitReversal}-based scheduling was first proposed by Gentry et
al.~\cite{Gentry13} and is advantageous to keep the stash size small (used
implicitly in Lemma~\ref{Ring ORAM Lemma 3}).
It also enables to simplify stash size analysis.
For security, the important thing is that $G$ (and $\textsc{BitReversal}(G)$)
is independent of the accessed database locations.

\begin{algorithm}
  \caption{Main routine}
  \label{algorithm: main}
  \begin{algorithmic}[1]
    \Function{Access}{$a,\mathrm{op},v'$}
      \State $\ell' \leftarrow \textsc{Random}(L)$
      \State $\ell \leftarrow \mathrm{Pos}[a]$
      \State $\mathrm{Pos}[a] \leftarrow \ell'$
      \Statex
      \State $v \leftarrow \textsc{ReadPath}(\ell,a)$
      \If {$v = \bot$ }
        \State find $(a,\ell,v'') \in \text{stash}$
        \Comment{there exists $(a,\ell,v'') \in \text{stash}$}
        \State $v \leftarrow v''$
        \State $\text{stash} \leftarrow \text{stash} \setminus (a,\ell,v'')$
      \EndIf
      \State $ret \leftarrow v$
      \If {$\mathrm{op} = \mathrm{write}$}
        \State $v \leftarrow v'$
      \EndIf
        \State $\text{stash} \leftarrow \text{stash} \cup (a,\ell',v)$
      \Statex
      \State \textsc{EvictPath}()
      \Statex
      \State return $ret$
    \EndFunction
  \end{algorithmic}
\end{algorithm}

\begin{algorithm}
  \caption{Subroutines for \textsc{Access}}
  \label{algorithm: subroutine}
  \begin{algorithmic}[1]
    \Function{ReadPath}{$\ell, a$}
      \State $v \leftarrow \bot$
      \For {$i \leftarrow 0$ to $L$}
        \State $\mathrm{md} \leftarrow \mathrm{meta}[P(\ell,i)]$
        \For {$j \leftarrow 1$ to $|P(\ell, i)|$}
          \State $v' \leftarrow P(\ell, i, j)$
          \If {$\mathrm{md}[j] = (\mathrm{real},a,\ell)$}
            \State $v \leftarrow v'$
            \State $\mathrm{md}[j] \leftarrow (\mathrm{dummy},\cdot,\cdot)$
          \EndIf
        \EndFor
      \State $\mathrm{meta}[P(\ell,i)] \leftarrow \mathrm{md}$
    \EndFor
    \State return $v$
  \EndFunction
  \end{algorithmic}

  \hfill

  \begin{algorithmic}[1]
    \Function{EvictPath}{ }
      \State $\ell \leftarrow G \mod{2^L}$
      \Comment{$G$ is global/persistent, and initially zero}
      \State $G \leftarrow G+1$
      \State $\ell' \leftarrow \textsc{BitReversal}(\ell)$
      \For {$i \leftarrow 0$ to $L$}
        \State $\text{stash} \leftarrow
                \text{stash} \cup \textsc{ReadBucket}(P(\ell',i))$
      \EndFor
      \For {$i \leftarrow L$ to $0$}
        \State $\textsc{WriteBucket}(P(\ell',i),\text{stash})$
      \EndFor
    \EndFunction
  \end{algorithmic}

  \hfill

  \begin{algorithmic}[1]
    \Function{ReadBucket}{$P(\ell,i)$}
      \State $S \leftarrow \emptyset$
      \State $\mathrm{md} \leftarrow \mathrm{meta}[P(\ell,i)]$
      \For {$j \leftarrow 1$ to $|P(\ell,i)|$}
        \State $v \leftarrow P(\ell,i,j)$
        \If {$\mathrm{md}[j]= (\mathrm{real},a,\ell')$ for some $a$ and $\ell'$}
          \State $S \leftarrow S \cup (a,\ell',v)$
          \State $\mathrm{md}[j] \leftarrow (\mathrm{dummy},\cdot,\cdot)$
        \EndIf
      \EndFor
      \State $\mathrm{meta}[P(\ell,i)] \leftarrow \mathrm{md}$
      \State return $S$
    \EndFunction
  \end{algorithmic}

  \hfill

  \begin{algorithmic}[1]
    \Function{WriteBucket}{$P(\ell,i)$,stash}
      \State $S \leftarrow$ blocks in the stash whose labels have the same length $i$ prefix as $\ell$
      \For{$j \leftarrow 1$ to $|P(\ell,i)|$}
        \If {$S \neq \emptyset$}
          \State pick arbitrary $(a,\ell,v) \in S$
          \State $P(\ell,i,j) \leftarrow v$
          \State $\mathrm{md}[j] \leftarrow (\mathrm{real},a,\ell)$
          \State $S \leftarrow S \setminus (a,\ell,v)$
        \Else
          \State $P(\ell,i,j) \leftarrow \text{garbage}$
          \State $\mathrm{md}[j] \leftarrow (\mathrm{dummy},\cdot,\cdot)$
        \EndIf
      \EndFor
      \State $md[P(\ell,i)] \leftarrow \mathrm{md}$
    \EndFunction
  \end{algorithmic}
\end{algorithm}

\paragraph{Outsourcing position table.}

In the construction described so far, the user space usage is much larger than
the bound claimed in \cref{main theorem} since the user needs to maintain the
position table locally.
To obtain \cref{main theorem}, we modify the construction so that the position
table is stored on the server using the sub-ORAM in \cref{prop: Path ORAM},
e.g., the Path ORAM~\cite{Stefanov13}.
Access procedure is the same except that the line 3--4 of \textsc{Access}
($\ell \leftarrow Pos[a]$ and $Pos[a] \leftarrow \ell)$ is replaced by a
sub-ORAM write access.

\subsection{Security} \label{subsec: succ oram security}

Fix $t>0$.
Let $\mathbf{a}$ be a length $t>0$ sequence of logical addresses to be accessed
and $\mathbf{a}'$ be the corresponding sequence of physical addresses (indices
of the server memory) to be accessed.
The sequence $\mathbf{a}'$ is determined by $\mathbf{a}$ and the randomness
used by the ORAM simulator.
To prove the information theoretic security, it suffices to show that
$\mathbf{a}'$ really does not depend on $\mathbf{a}$.
The sequence $\mathbf{a}'$ consists of $\mathbf{a}'_1$, the physical addresses
accessed in step 3--4 of \textsc{Access} and $\mathbf{a}'_2$, those accessed in
the rest parts of \textsc{Access}.
The addresses $\mathbf{a}'_1$ is determined by the sub-ORAM access procedure
and is independent of $\mathbf{a}$ due to the information theoretic security of
the sub-ORAM.
The addresses $\mathbf{a}'_2$ consists of addresses accessed by
$\textsc{ReadPath}(\ell,a)$ and $\textsc{EvictPath}()$.
$\textsc{ReadPath}(\ell,a)$ accesses the path $P(\ell)$, which is determined by
$\ell$, the position label of the accessed block.
Since the position labels are chosen independently and uniformly at random, the
$\textsc{ReadPath}$ accesses are independent of $\mathbf{a}$.
$\textsc{EvictPath}$ accesses $P(\textsc{BitReversal}(G))$, which is determined
by $G$, the number of times $\textsc{Access}$ was called (modulo $2^L$).
Thus, the accesses of $\textsc{EvictPath}$ is also independent of $\mathbf{a}$.
Therefore, $\mathbf{a}'$ is independent of $\mathbf{a}$.

\subsection{Server Space}

First, it is helpful to observe the followings:
\begin{equation}
  \log{N} = \Theta(\log{n}), \quad
  L = \Theta(\log{n}), \quad
  M = \Theta(f(n)) \label{useful bounds}.
\end{equation}
Remember that the server holds the data tree, the metadata tree and the
position table.

The total size of the internal (resp. leaf) buckets is $Z(2^L-1)$ (resp.
$M2^L$) blocks.
Since
\begin{align*}
  & Z(2^L-1) < Z2^L = ZN/f(n),
  & M2^L = N+g(n)\sqrt{NL2^L} = N\parens{1+\Theta\parens{\frac{g(n)}{\sqrt{f(n)/\log{n}}}}},
\end{align*}
the number of the blocks in the data tree is bounded by
\begin{equation*}
    ZN/f(n) + N\parens{1+\Theta\parens{\frac{g(n)}{\sqrt{f(n)/\log{n}}}}}
  = N\parens{1 + \Theta\parens{\frac{1}{f(n)} + \frac{g(n)}{\sqrt{f(n)/\log{n}}}}}.
\end{equation*}

The metadata for each data block takes 1 bit for \textsf{type}, $\ceil{\lg{N}}$
bits for \textsf{addr} and $L$ bits for \textsf{pos}.
The total is $\Theta(\log{n})$ bits, which is $\Theta(\frac{\log{n}}{B})$
blocks.
Thus, the number of bits in the data tree and the metadata tree combined is
\begin{equation*}
    BN\parens{1 + \Theta\parens{\frac{1}{f(n)} + \frac{g(n)}{\sqrt{f(n)/\log{n}}}}} \parens{1 + \Theta\parens{\frac{\log{n}}{B}}}
  =  n\parens{1 + \Theta\parens{\frac{\log{n}}{B} + \frac{g(n)}{\sqrt{f(n)/\log{n}}}}}.
\end{equation*}

The position labels take $NL = n\frac{L}{B} \le n\frac{\lg{n}}{B}$ bits.
By \cref{prop: Path ORAM}, the sub-ORAM containing the position table takes
$\Theta(n\frac{\log{n}}{B})$ bits.
Thus, the server space is $n(1 + \Theta(\frac{\log{n}}{B} +
\frac{g(n)}{\sqrt{f(n)/\log{n}}}))$ bits.

\subsection{Bandwidth Blowup}

The bandwidth cost of each of \textsc{ReadPath} and \textsc{EvictPath} is
proportional to the sum of the numbers of the blocks in a root--leaf path in
the data tree and the metadata tree.
The number for the data tree is $ZL+M = O(\log{n}) + O(f(n)) = O(f(n))$.
The number for the metadata tree is around $\frac{2\lg{N}+1}{B} = o(1)$ factor
of that for the data tree.
The bandwidth cost for accessing the position table is $O(\log^2{n})$ by
\cref{prop: Path ORAM}.
Therefore, the bandwidth blowup of \textsc{Access} is $O(\log^2{n})$.

\subsection{User Space} \label{subsec: succ oram user space}

The temporary user space usage is proportional to the sum of the numbers of the
blocks in a root--leaf path in the data tree and the metadata tree.
As is shown in the bandwidth analysis, the latter is bounded by $O(f(n))$.

In the rest of this subsection, we bound the permanent user space usage, i.e.,
the stash size.
First, we import some concepts and tools from~\cite{Stefanov13} and
\cite{Ren15}.
Fix a sequence of input logical access requests.
Later, we will specify a concrete request sequence that we use for the
analysis.
Let $\mathrm{ORAM}_Z$ be the real ORAM construction that we are analyzing and
$\mathrm{ORAM}_\infty$ be the hypothetical ORAM construction derived by
modifying the size of each bucket in $\mathrm{ORAM}_Z$ to $\infty$.
Let $S_Z$ (resp. $S_\infty$) be the $\mathrm{ORAM}_Z$ (resp.
$\mathrm{ORAM}_\infty$) after processing the access requests.
Note that $\mathrm{ORAM}_Z$ (resp. $\mathrm{ORAM}_\infty$) is an ORAM
construction while $S_Z$ (resp. $S_\infty$) is the state of the construction at
a particular point of time.
We write $G$ to denote a post-processing algorithm that takes $S_Z$ and
$S_\infty$ and modify $S_\infty$ in the following way.
The algorithm $G$ enumerates the buckets in $\mathrm{ORAM}_\infty$ in reverse
breadth first order.
We define $b^Z_1$ to be the root bucket of $\mathrm{ORAM}_Z$ and $b^Z_{2i}$
(resp. $b^Z_{2i+1}$) to be the left (resp. right) child of $b^Z_i$.
We define $b^Z_0$ to be the stash of $\mathrm{ORAM}_Z$.
For $i\in[2^{L+1}]$, $b^\infty_i$ is defined similarly.
For each $i$ from $2^{L+1}-1$ to 1, $G$ processes each block $v\in b^\infty_i$
as follows: i) if $v \in b^Z_i$, $v$ is left as it is; ii) if in $S_Z$, $v$ is
stored in some proper ancestor of $b^Z_i$, $G$ moves $v$ to
$b^\infty_{\floor{i/2}}$, i.e., $b^Z_i$'s parent.  If the number of blocks left
in $b^\infty_i$ after such transportation is less than $Z$, $G$ outputs an
error; iii) if, in $S_Z$, $v$ is not stored in any ancestor of $b^Z_i$, $G$
outputs an error.
We denote the output of $G$ with input $S_Z$ and $S_\infty$ as
$G_{S_Z}(S_\infty)$.
For $S\in\{S_Z,G_{S_Z}(S_\infty)\}$, we define $st(S)$ to be the number of
blocks in the stash of $S$.
Due to the following lemma, $st(S_Z)$ and $st(G_{S_Z}(S_\infty))$ are
equivalent as random variables.

\begin{lemma}[\cite{Ren15} Lemma 1]
If the randomness used in $\mathrm{ORAM}_Z$ is the same as
$\mathrm{ORAM}_\infty$, i.e., the position labels assigned to the accessed
blocks are the same in the two ORAM constructions, then $G$ does not output an
error and $S_Z = G_{S_Z}(S_\infty)$.
\end{lemma}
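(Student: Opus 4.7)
My plan is a proof by induction on the sequence of \textsc{Access} operations, maintaining the invariant that, under identical randomness, $S_Z = G_{S_Z}(S_\infty)$ holds after every call and $G$ succeeds without error. For the base case, both ORAMs start from the identical fully-initialized state in which every real block occupies its assigned node and the stash is empty; there $G$ performs no moves and the invariant is immediate. For the inductive step, I would track how a single \textsc{Access} evolves both ORAMs in lockstep. The position-table operations, the \textsc{ReadPath} that extracts the accessed block, the possible value rewrite, and the stash insertion all behave identically in both ORAMs, since they are insensitive to bucket capacity and driven only by the shared position label $\ell$ and the fixed accessed address $a$. The subtle part is \textsc{EvictPath} on path $P(\ell')$: both ORAMs first drain all blocks on this path into the stash (again identical), and then refill $P(\ell',L), P(\ell',L-1), \ldots, P(\ell',0)$ in that order via \textsc{WriteBucket}. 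In $\mathrm{ORAM}_\infty$ every eligible block is placed at its deepest admissible bucket on $P(\ell')$, while in $\mathrm{ORAM}_Z$ the same greedy process runs but with a $Z$-capacity cap per internal bucket and an $M$-cap per leaf.

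The central claim to nail down is the following correspondence: fixing a canonical tie-breaking rule for the ``pick arbitrary'' step in \textsc{WriteBucket}, the blocks placed at $P(\ell',i)$ in $\mathrm{ORAM}_Z$ are precisely the first $|P(\ell',i)|$ blocks (under that rule) that, in $\mathrm{ORAM}_\infty$, are eligible to go at depth $\le i$ on this path. Consequently, any block appearing in $b^\infty_i$ but not in $b^Z_i$ must sit in $\mathrm{ORAM}_Z$ at some strict ancestor of $b^Z_i$ (possibly the stash $b^Z_0$), which is exactly when $G$'s case (ii) applies, ruling out case (iii). Moreover, a block is moved up by $G$ only when overflow actually occurred in $\mathrm{ORAM}_Z$, which forces $b^Z_i$ to be saturated to its full capacity; hence at least $Z$ blocks remain in $b^\infty_i$ after the transportation, so the ``fewer than $Z$ left'' error also never triggers. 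Combining this with the fact that $G$ leaves every block of $b^Z_i$ (via case (i)) and relocates every other element of $b^\infty_i$ upward in a way that exactly matches $S_Z$, the equality $S_Z = G_{S_Z}(S_\infty)$ follows from the definition of $G$.

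The hard part is formalizing the greedy correspondence across the entire history of eviction paths, because the state of each bucket depends cumulatively on all past \textsc{EvictPath} calls passing through it, and the block sets selected by \textsc{WriteBucket} in the two ORAMs must stay coupled under a shared tie-breaking rule. I expect to handle this by a nested induction: the outer induction on the access count $t$, and an inner induction on the depth from the leaf upward within a single \textsc{EvictPath}, using the outer inductive hypothesis to equate the pre-eviction contents of corresponding bucket-plus-ancestor slices in the two ORAMs. Any divergence between $b^Z_i$ and the $G$-reconstructed $b^\infty_i$ would then imply a discrepancy either at a strictly deeper node processed earlier in the inner induction or at an earlier operation, contradicting one of the two hypotheses. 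Once this coupling is in place, both the no-error guarantee and the equality of states are immediate consequences.
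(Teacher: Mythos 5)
The first thing to note is that the paper does not prove this lemma at all: it is imported verbatim from the Ring ORAM paper \cite{Ren15} and used as a black box, so the only proof to compare against is the one in that reference, whose overall architecture (induction over the access sequence, arguing that the post-processing $G$ commutes with each \textsc{ReadPath}/\textsc{EvictPath} step) your outline does mirror. Against that benchmark, however, your sketch has a genuine gap at exactly the step you dispose of in one sentence. You justify the capacity clause of the no-error claim by saying that ``a block is moved up by $G$ only when overflow actually occurred in $\mathrm{ORAM}_Z$, which forces $b^Z_i$ to be saturated to its full capacity; hence at least $Z$ blocks remain in $b^\infty_i$.'' This conflates fullness of $b^Z_i$ \emph{at the time of the eviction} that displaced the overflowing block with fullness of $b^Z_i$ \emph{in the final state} $S_Z$, and the two can differ: \textsc{ReadPath} deletes the accessed block in place, from $b^Z_i$ in one run and from a possibly deeper bucket in the other, and the subsequent eviction path need not pass through $b_i$ again before the sequence ends. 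To see the danger, take $Z=1$, $L=2$, and arrange two blocks $x,y$ with position label $01$ in the stash when the path to leaf $00$ is evicted: the depth-$1$ node $u$ receives one of them (say $x$) in $\mathrm{ORAM}_Z$ and both of them in $\mathrm{ORAM}_\infty$, with $y$ lodged at the root in $\mathrm{ORAM}_Z$. If the next access reads $x$ (removing it from $u$ in both runs) and its eviction path avoids $u$, the final state has $b^Z_u$ empty while $y$ sits at a proper ancestor of $u$ in $S_Z$ and inside $u$'s subtree in $S_\infty$; $G$ then moves $y$ up and finds fewer than $Z$ blocks left --- precisely the configuration your argument declares impossible. (The state equality $S_Z = G_{S_Z}(S_\infty)$ itself survives in this scenario; it is the saturation/capacity claim that your reasoning does not secure.) So the real content of the induction is a stronger invariant that must be carried across the \textsc{ReadPath} steps, which you classify as the trivial part, and not only across \textsc{EvictPath}.

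Two smaller points. First, the ``canonical tie-breaking rule'' coupling is not needed in the form you propose: in $\mathrm{ORAM}_\infty$ every eligible block is placed (infinite capacity), so the ``pick arbitrary'' choices are immaterial there, and $G$ is defined with reference to $S_Z$ itself, so no cross-run coupling of \textsc{WriteBucket}'s choices enters the statement; what you do need is the ancestor invariant (each block's position in $S_Z$ is an ancestor-or-equal of its position in $S_\infty$), which your case analysis uses implicitly and which is the part worth proving carefully, since it is what rules out error case (iii). Second, your central claim misstates eligibility: a block can be placed in $P(\ell',i)$ iff its label shares the length-$i$ prefix of $\ell'$, i.e., its deepest admissible depth on the path is at least $i$, so ``eligible to go at depth $\le i$'' should read ``admissible at depth $i$ or deeper.'' These are repairable, but the saturation step above is a missing idea, not a missing detail.
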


\noindent Let a subtree be a connected subgraph of the complete binary tree
with $2^L$ leaves that contains the root.
For a subtree $T$, let $C(T)$ be the number of blocks that can be stored in the
corresponding buckets of $\mathrm{ORAM}_Z$, and $X(T)$ be the number of blocks
that are stored in the corresponding buckets of $S_\infty$.
Note that $C(T)$ is a constant while $X(T)$ is a random variable.
Also, let $n(T)$ denote the number of nodes in $T$.

\begin{lemma}[\cite{Ren15} Lemma 2]
For any integer $R>0$, $st(G_{S_Z}(S_\infty)) > R$ iff there exists a subtree
$T$ such that $X(T) > C(T)+R$.
\end{lemma}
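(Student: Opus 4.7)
The plan is to prove both directions of the biconditional. The backward direction is a direct capacity/flow count; the forward direction builds a specific witness subtree and reads off $X(T) - C(T) > R$ from a telescoping flow identity.

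For $(\Leftarrow)$, suppose $X(T) > C(T) + R$ for some subtree $T$ containing the root. Because $G$ only ever moves a block from a bucket to its parent and $T$ is a connected subgraph containing the root, every block originally at a bucket of $T$ in $S_\infty$ must, in $S_Z$, either stay inside a bucket of $T$ or escape through the root into the stash $b^Z_0$. Since the $T$-buckets accommodate at most $C(T)$ blocks, at least $X(T) - C(T) > R$ of the $X(T)$ blocks end up in the stash, so $st(G_{S_Z}(S_\infty)) > R$.

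For $(\Rightarrow)$, let $e_i$ denote the number of blocks that $G$ transfers from $b^\infty_i$ up to its parent, so that $e_1 = st(G_{S_Z}(S_\infty)) > R$. I would take $T$ to be the maximal subtree containing the root for which every bucket $b^Z_i$ with $i \in T$ is filled to capacity in $S_Z$. Two structural facts about \textsc{EvictPath}---which fills each eviction path greedily from the leaf upward---are needed: (i) whenever the stash is nonempty, the root bucket must be at capacity (any spare slot in the root would have absorbed one of the overflowing blocks), so $T$ contains the root; and (ii) for any boundary index $j \notin T$ whose parent lies in $T$, one has $e_j = 0$, because a block stranded strictly above $b^Z_j$ in $S_Z$ while $b^Z_j$ has spare capacity is incompatible with the greedy eviction rule.

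The bucket-level conservation law at $b^\infty_i$ reads
\[
(\text{original count at }b^\infty_i) \;=\; |b^Z_i| + e_i - e_{2i} - e_{2i+1},
\]
since arrivals at $b^\infty_i$ (its original contents plus $e_{2i}+e_{2i+1}$ from the children) split into the $|b^Z_i|$ blocks that stay and the $e_i$ blocks that move up. Summing this identity over $i \in T$, the $e$-terms telescope to $e_1$ (the unmatched negative contributions live on the boundary of $T$ and vanish by (ii)), and the $|b^Z_i|$ terms sum to $C(T)$ because every $i \in T$ has $b^Z_i$ at capacity. This gives $X(T) = C(T) + e_1 > C(T) + R$, as required. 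The main obstacle is establishing the structural facts (i) and (ii) from the behaviour of \textsc{EvictPath}; once they are in hand, the telescoping identity closes the argument almost mechanically.
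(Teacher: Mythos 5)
Your proof is correct and follows essentially the same route as the proof of Lemma~2 in~\cite{Ren15}, which this paper imports without reproving: the backward direction is the identical capacity argument, and your forward direction (maximal full root-containing subtree plus zero outflow across its boundary) is the same witness used there; your telescoping conservation identity is simply a more explicit form of their counting ``no blocks cross the boundary into $T$, so the $C(T)$ blocks filling $T$ and the more than $R$ stash blocks all originated in $T$, whence $X(T) \ge C(T) + st > C(T)+R$.''

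One substantive remark on what you call the main obstacle: facts (i) and (ii) do not need to be established from the behaviour of \textsc{EvictPath} at all --- they are immediate from Lemma~1 of~\cite{Ren15} (quoted just above the statement) together with the error conditions built into the definition of $G$. Since $G$ outputs no error, whenever any block is moved out of $b^\infty_i$ (i.e., $e_i>0$) at least $Z$ blocks remain there, and since $S_Z = G_{S_Z}(S_\infty)$ the bucket $b^Z_i$ is then full; taking $i$ a boundary index gives (ii) by maximality of $T$, and taking $i=1$ with $e_1 = st(G_{S_Z}(S_\infty)) > R > 0$ gives (i). (For leaf buckets the threshold $Z$ in $G$'s error condition should be read as the bucket's capacity $M$.) Be warned, moreover, that your proposed direct justification of (ii) --- that a block stranded strictly above $b^Z_j$ while $b^Z_j$ has spare capacity is incompatible with greedy eviction --- is \emph{not} true of arbitrary reachable states of the real algorithm: once a \textsc{ReadPath} removes a tree-resident block from $b^Z_j$, another block may legitimately sit in a proper ancestor of the now non-full $b^Z_j$ until the next eviction path through $j$. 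The fact holds in this analysis only because, via the working-set reduction (\cref{Path ORAM Lemma 3}), one fixes a sequence that accesses each address exactly once starting from an empty tree, so buckets never lose blocks except during evictions through them --- and this is precisely what the no-error guarantee of Lemma~1 packages for you. So your architecture is sound, but the clean way to discharge (i) and (ii) is through Lemma~1 rather than through fresh reasoning about \textsc{EvictPath}.
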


\noindent We call those subtrees that contain only internal nodes (of the
enclosing complete binary tree) as internal subtrees.

\begin{lemma}[\cite{Ren15} Lemma 3] \label{Ring ORAM Lemma 3}
For any internal subtree $T$, $E[X(T)] \le n(T)/2$.
\end{lemma}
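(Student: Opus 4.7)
My strategy is to establish the per-node bound $E[X_v] \le \tfrac{1}{2}$ for every internal node $v$ and then appeal to linearity of expectation: $E[X(T)] = \sum_{v\in T} E[X_v] \le n(T)/2$, where $X_v$ denotes the number of blocks of $S_\infty$ sitting in bucket $v$. The starting point is a clean characterization of where blocks live in $\mathrm{ORAM}_\infty$. Because buckets are unbounded, every \textsc{EvictPath} pushes each affected block as deep as its position label permits, and no block is ever blocked by capacity. Consequently, a block with position label $\ell$ that has experienced $k \ge 1$ \textsc{EvictPath} invocations since its last access resides at the deepest node on $P(\ell)$ that lies on at least one of those $k$ eviction paths. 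Since each \textsc{Access} refreshes a single block, at most one block has age $k$ for each $k$, so I will sum bounds on the per-age contributions to $E[X_v]$.

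Fix an internal node $v$ at depth $d < L$ with $d$-bit binary address $s$. A block of age $k$ and label $\ell$ sits exactly at $v$ iff (i) $\ell$ begins with $s$, so $v \in P(\ell)$; (ii) $v$ lies on one of the last $k$ eviction paths; and (iii) the depth-$(d+1)$ node on $P(\ell)$ lies on none of them. Event (i) has probability $2^{-d}$, and conditioning on (i) leaves the $(d+1)$-th bit of $\ell$ uniform on $\{0,1\}$. If both depth-$(d+1)$ children of $v$ are covered by the last $k$ eviction paths then (iii) fails; if neither is, then $v$ itself is uncovered and (ii) fails; only when \emph{exactly one} of $v$'s depth-$(d+1)$ children is covered do (ii)$\land$(iii) hold, and they then hold with conditional probability $\tfrac{1}{2}$ (the uniform $(d+1)$-th bit of $\ell$ must select the uncovered child).

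The key combinatorial fact about \textsc{BitReversal} is that $P(\textsc{BitReversal}(G))$ passes through the depth-$(d+1)$ node whose binary address, reversed, equals the last $d+1$ bits of $G$; hence among any $2^{d+1}$ consecutive values of $G$ every depth-$(d+1)$ node is covered exactly once, and the covering times of $v$'s two children differ by exactly $2^d$ modulo $2^{d+1}$. Enlarging the window of the $k$ most recent evictions therefore covers one child first and, $2^d$ evictions later, covers the other, so the set of $k$ for which exactly one child is covered is a contiguous interval of length $2^d$. Combining,
\[
  E[X_v] \;\le\; \#\{k : \text{exactly one child of } v \text{ is covered}\}\cdot 2^{-d} \cdot \tfrac{1}{2} \;\le\; 2^d \cdot 2^{-d-1} \;=\; \tfrac{1}{2},
\]
and summing over $v \in T$ gives $E[X(T)] \le n(T)/2$. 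The main technical subtlety is the modular covering property: verifying that \textsc{BitReversal} really spreads the two children of every internal node by exactly $2^d$ is the entire reason this scheduling was adopted over naive round-robin in~\cite{Gentry13}, and the argument breaks down for other natural eviction orders.
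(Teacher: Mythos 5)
Your proof is correct, and it is worth noting that the paper itself never proves this lemma---it imports it wholesale from Ren et al.~\cite{Ren15} (their Lemma~3, specialized to eviction rate $A=1$), so you have supplied an argument the paper only cites. Your route is essentially the Ring ORAM argument in its key ingredients, but with different bookkeeping. Both proofs rest on the same two facts: in $\mathrm{ORAM}_\infty$ a block with label $\ell$ sits at the deepest node of $P(\ell)$ covered by some eviction path since its last access (your monotonicity observation---any eviction path sharing a longer prefix with $\ell$ necessarily passes through the block's current bucket---is exactly what makes this well defined), and the bit-reversal schedule visits the two children of a depth-$d$ node with offset exactly $2^d$ modulo $2^{d+1}$. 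Where Ren et al.\ condition on the \emph{last} eviction through bucket $b$ and argue that only the (at most) $2^d$ blocks accessed since the previous eviction through $b$'s off-path child can land in $b$, each with probability $2^{-(d+1)}$, you instead sum over block ages $k$ and show the set of ages for which ``exactly one child of $v$ is covered'' is an interval of length exactly $2^d$; the two decompositions are reparametrizations of one another and both yield $2^d\cdot 2^{-(d+1)}=\tfrac12$ per bucket. Your per-age organization is arguably cleaner in that it handles the root ($d=0$, a single admissible age) and the early transient (fewer than $2^d$ evictions so far, which only shortens the interval) uniformly, whereas the window formulation must treat ``no previous eviction through the sibling child'' as a separate case; the Ren et al.\ formulation, on the other hand, generalizes immediately to eviction every $A$ accesses, giving $E[X(T)]\le n(T)\,A/2$, which your age-based count also supports (at most $A$ blocks per eviction epoch) but does not state. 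One small point to make explicit if you write this up: the eviction schedule is deterministic (it depends only on $G$, not on the requests), so the coverage sets in your events (ii) and (iii) are fixed and the only randomness is the uniform label from $\textsc{Random}(L)$---this is what licenses multiplying $2^{-d}$ by $\tfrac12$ without any conditioning subtleties.
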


\noindent Let the working set of a sequence of access requests
$(\mathrm{op}_i,\mathrm{addr}_i,\mathrm{val}_i)_i$ be the set
$\{\mathrm{addr}_i\}_i$.

\begin{lemma}[\cite{Stefanov13} Lemma 3] \label{Path ORAM Lemma 3}
Among all access request sequences of working set size $t$, the probability
$\Pr[st(S_Z)>R]$ is maximized by the sequence that contains exactly one access
to each of the $t$ different addresses.
\end{lemma}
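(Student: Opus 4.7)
The plan is to establish this as a monotonicity result via a reduction from arbitrary sequences to the canonical one, using the coupling framework of the preceding lemmas.

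First, I would reduce the problem to $\mathrm{ORAM}_\infty$. By the preceding three lemmas (the equivalence $S_Z = G_{S_Z}(S_\infty)$ under coupled randomness, and the characterization $st(G_{S_Z}(S_\infty)) > R \iff \exists T : X(T) > C(T) + R$), the quantity $\Pr[st(S_Z) > R]$ equals $\Pr[\exists T : X(T) > C(T) + R]$ in $\mathrm{ORAM}_\infty$ run on the same sequence. So it suffices to show that this ``subtree-overfullness'' probability in $\mathrm{ORAM}_\infty$ is maximized by the canonical length-$t$ sequence $\sigma^\star$ that touches each of the $t$ addresses exactly once.

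Next, I would prove a one-step reduction lemma: if $\sigma$ has working set size $t$ and contains a non-final access to some address $a$, say at position $i$, then the sequence $\sigma'$ obtained by deleting that single access still has working set size $t$, and $\Pr[\exists T : X(T) > C(T) + R]$ is weakly larger under $\sigma'$ than under $\sigma$. Iterating this reduction terminates at $\sigma^\star$, which would give the claim. The intuitive content of the lemma is that the deleted access is ``redundant'' for the purpose of placing block $a$ (the later access to $a$ re-randomizes its position anyway), but it does trigger an extra \textsc{EvictPath}, which can only push blocks deeper into the tree and therefore can only decrease subtree occupancies.

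To make this rigorous I would set up a coupling in which the $k$-th access of $\sigma'$ is identified with the $k$-th access of $\sigma$ for $k < i$ and with the $(k+1)$-th access of $\sigma$ for $k \ge i$, using the same random position label at each matched access. Under this coupling, at every matched time step the set of blocks assigned to each position is identical in the two executions up to the effect of one additional intervening \textsc{EvictPath} in $\sigma$ (the one at step $i$). The core technical claim I would need is that this extra eviction produces a state (in $S_\infty$) that is weakly dominated, subtree-by-subtree, by the state in $\sigma'$: for every internal subtree $T$, the value $X(T)$ under $\sigma$ is at most $X(T)$ under $\sigma'$, because the extra eviction can only move blocks out of $T$ (down the path, possibly past the leaves of $T$), never into $T$.

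The main obstacle is precisely this coupling argument, because the bit-reversal schedule ties eviction paths to the global counter $G$, so the eviction paths in $\sigma$ and $\sigma'$ at corresponding matched steps are shifted by one. I would handle this by exploiting the fact that \textsc{WriteBucket} only writes blocks whose position labels have the correct length-$i$ prefix for the eviction path; hence the subtree-occupancy dominance is preserved step by step regardless of \emph{which} path is taken next, as long as the label randomness is coupled. With this dominance in hand, $X(T) > C(T) + R$ under $\sigma$ implies the same event under $\sigma'$, yielding $\Pr[E_R(\sigma)] \le \Pr[E_R(\sigma')]$ and completing the inductive reduction to $\sigma^\star$.
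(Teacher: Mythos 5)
There is a mismatch of expectations here worth flagging first: the paper never proves this lemma at all --- it imports it verbatim from \cite{Stefanov13} as a black box (and in doing so quietly extends it from Path ORAM, where it was proved, to the present bit-reversal--scheduled construction, as \cite{Ren15} also does). So there is no in-paper proof to compare against, and your attempt must be judged on its own merits. Your overall plan --- reduce to $\mathrm{ORAM}_\infty$ via the two coupling lemmas, then delete non-final accesses one at a time and argue that the deleted access is redundant for placement while its extra \textsc{EvictPath} only pushes blocks deeper --- is exactly the shape of the original Path ORAM argument, and it would go through in Path ORAM, where the eviction path at each access is the just-read path and hence fresh uniform randomness: there, deleting an access removes one independent eviction and leaves all other randomness couplable unchanged, and monotonicity of depths does the rest.

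For \emph{this} construction, however, your core technical claim is false, at precisely the point you flagged and then waved away. In $\mathrm{ORAM}_\infty$, a block with final label $\ell$ ends at depth $\max_p \mathrm{lcp}(p,\ell)$, the maximum taken over the eviction paths $p$ occurring at or after its last access (each eviction either leaves the block in place or moves it to depth $\mathrm{lcp}(p,\ell)$). Under bit-reversal scheduling the eviction path at the $m$-th access is the \emph{deterministic} value $r_m = \textsc{BitReversal}(m-1 \bmod 2^L)$, so deleting access $i$ shifts every later access by one: a block last touched at matched step $k \ge i$ sees the eviction window $\{r_k,\dots,r_{T-1}\}$ under $\sigma'$ but $\{r_{k+1},\dots,r_T\}$ under $\sigma$. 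Neither window contains the other, so no pointwise dominance holds in either direction. Concretely, take $L=1$, $\sigma=(a,b,a)$, $\sigma'=(b,a)$; the eviction leaves are $0,1,0$ in order. With the coupled final label of $a$ equal to $1$, block $a$'s post-access window in $\sigma$ is $\{r_3\}=\{0\}$, leaving it at the root, while in $\sigma'$ it is $\{r_2\}=\{1\}$, sending it to the leaf; thus $X(\{\mathrm{root}\})$ is \emph{strictly larger} under $\sigma$ than under $\sigma'$ on this sample point. The prefix-filtering property of \textsc{WriteBucket} that you invoke (``dominance is preserved regardless of which path is taken next'') cannot repair this: it is exactly the identity of the shifted paths that matters, and the inequality you need holds at best in distribution, not samplewise. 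Any correct proof for the bit-reversal schedule therefore needs a distribution-level argument (e.g., a re-indexing or averaging over the schedule's phase) rather than the per-sample subtree dominance your reduction rests on.
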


\noindent Because of \cref{Path ORAM Lemma 3}, we fix the input access request
sequence to $(\mathrm{op}_i,i,\mathrm{val}_i)_{i\in[N]}$ without loss of
generality.
($\mathrm{op}_i$ and $\mathrm{val}_i$ are arbitrary since they do not affect
the stash size.)

Now we prove
\begin{equation} \label{target bound}
  \Pr[st(S_Z)>R] = n^{-\omega(1)}
\end{equation}
for $R = \omega(\log{n})$.
Remember that \eqref{target bound} is a bound on the stash size at a particular
point of time.
Given \eqref{target bound}, the probability that the stash size becomes larger
than $R$ at \emph{any} point in $\poly(n)$ logical accesses is also bounded as
$n^{-\omega(1)}$ by union bound.

Let \textsf{G} be the event that no leaf bucket of $S_\infty$ contains more
than $M$ blocks.
Let \textsf{B} be the complement of \textsf{G}.

\begin{lemma} \label{balls-into-bins bound}
  $\Pr[\mathsf{B}] = n^{-\omega(1)}$.
\end{lemma}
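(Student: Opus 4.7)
The plan is to reduce the claim to a standard balls-into-bins tail estimate via a simple domination argument and then finish with Chernoff plus a union bound. By \cref{Path ORAM Lemma 3} we may assume the fixed access sequence visits each of the $N$ addresses exactly once, so after processing it every block's current position label is the fresh independent uniform sample from $[2^L]$ returned by $\textsc{Random}(L)$ at that block's unique access. Fix a leaf index $b\in[2^L]$, let $X_b$ denote the number of blocks residing in the $b$-th leaf bucket of $S_\infty$, and let $Y_b$ denote the number of blocks whose current position label equals $b$. Any block at leaf $b$ must have position label $b$ by the tree invariant, so $X_b \le Y_b$, and $Y_b$ is binomially distributed with parameters $N$ and $2^{-L}$; in particular $\mu := \E[Y_b] = N/2^L = \Theta(f(n))$.

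Next I would bound $\Pr[Y_b > M]$ by a standard multiplicative Chernoff inequality. From the definition of $M$ we have $M - \mu \ge g(n)\sqrt{\mu L}$. Using $\mu = \Theta(f(n))$, $L = \Theta(\log n)$, and the hypothesis $g(n) = o(\sqrt{f(n)/\log n})$, one checks that $M-\mu = o(\mu)$, which puts us in the Gaussian regime of Chernoff. Consequently,
\[
  \Pr[Y_b > M] \le \exp(-\Omega((M-\mu)^2/\mu)) = \exp(-\Omega(g(n)^2 L)) = \exp(-\omega(\log n)),
\]
where the last equality uses $g(n)^2 = \omega(1)$ and $L = \Theta(\log n)$.

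A union bound over the $2^L \le N$ leaves then yields $\Pr[\mathsf{B}] \le 2^L \cdot \Pr[Y_1 > M] \le n \cdot \exp(-\omega(\log n)) = n^{-\omega(1)}$, which is the claim. The only step that requires some care is verifying $M - \mu = o(\mu)$ so that Chernoff delivers a Gaussian (rather than Poisson) tail of the right shape; this is exactly what the upper bound $g(n) = o(\sqrt{f(n)/\log n})$ in \cref{main theorem} is designed to give, and every other piece of the argument is routine.
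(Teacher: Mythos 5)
Your proof is correct and takes essentially the same route as the paper's: the paper likewise dominates the leaf load $\mathrm{load}_i$ by the position-label count $\mathrm{ctr}_i$, writes the latter as a sum of $N$ independent indicators with mean $N/2^L$ (your $\mathrm{Bin}(N,2^{-L})$), applies a Chernoff bound with deviation $g(n)\sqrt{\mu L}$ to obtain $\exp(-\Omega(g(n)^2 L))$, and concludes with a union bound over the $2^L \le N$ leaves. Your explicit verification that $M-\mu = o(\mu)$ is just the additive-form counterpart of the paper's use of the multiplicative bound $\exp(-\mu\delta^2/3)$ with $\delta = g(n)\sqrt{L2^L/N} = o(1)$, so the two arguments coincide.
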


\begin{proof}
Consider $\mathrm{ORAM}_\infty$ just before post-processing.
For $i \in [2^L]$, let $\mathrm{load}_i$ be the number of real blocks in the
$i$-th leaf bucket and $\mathrm{ctr}_i$ be the number of real blocks with
position label $i$.
Since a real block can be stored in the $i$-th leaf bucket only if it has
position label $i$, $\mathrm{load}_i \le \mathrm{ctr}_i$.
For $i \in [2^L]$ and $j \in [N]$ let $\mathrm{ctr}_{i,j}$ be the indicator
random variable of the event that the $j$-th accessed real block is assigned
position label $i$.
Clearly, $\mathrm{ctr}_i = \sum_j \mathrm{ctr}_{i,j}$ for each $i \in [2^L]$,
and $E[\mathrm{ctr}_{i,j}] = \Pr[\mathrm{ctr}_{i,j}=1] = 1/2^L$ for each $i \in
[2^L]$ and $j \in [N]$.
Thus, $E[\mathrm{ctr}_i] = E[\sum_j \mathrm{ctr}_{i,j}] = \sum_j
E[\mathrm{ctr}_{i,j}] = N/2^L$ for each $i \in [2^L]$.
For each $i\in[2^L]$, $\{\mathrm{ctr}_{i,j}\}_{j\in[N]}$ are mutually
independent.
The lemma follows as
\begin{align*}
        \Pr[\mathsf{B}]
  & =   \Pr[\cup_{i\in[2^L]} \mathrm{load}_i>M] \\
  & \le \Pr[\cup_{i\in[2^L]} \mathrm{ctr}_i > M] \\
  & \le \Sigma_{i\in[2^L]} \Pr[\mathrm{ctr}_i > M] \\
  & =   \sum_{i\in[2^L]} \Pr \brackets{\mathrm{ctr}_i > \frac{N}{2^L}\parens{1+g(n)\sqrt{\frac{L2^L}{N}}}} \\
  & \le 2^L \exp\parens{-(1/3)g(n)^2 L} \\
  & \le n \exp(-\omega(1)\ln{n}) \\
  & =   n^{-\omega(1)}.
\end{align*}
We used Chernoff bound in the fifth step.
\end{proof}

Let $\mathcal{T}$ be the set of all subtrees and $\mathcal{T}'$ be the set of
all internal subtrees.
Then,
\begin{align}
        \Pr[st(S_Z)>R]
  & =   \Pr[st(G_{S_Z}(S_\infty))>R] \nonumber \\
  & =   \Pr[\cup_{T\in\mathcal{T}} X(T)>C(T)+R] \nonumber \\
  & \le \Pr[\cup_{T\in\mathcal{T}} X(T)>C(T)+R | \mathsf{G}] + \Pr[\mathsf{B}] \nonumber \\
  & =   \Pr[\cup_{T\in\mathcal{T}'} X(T)>C(T)+R | \mathsf{G}] + \Pr[\mathsf{B}] \nonumber \\
  & \le \Sigma_{T\in\mathcal{T}'} \Pr[X(T)>C(T)+R | \mathsf{G}] + \Pr[\mathsf{B}] \nonumber \\
  &\le \sum_{m\ge 1} 4^m \max_{\substack{T\in\mathcal{T}' \\ n(T)=m}} \Pr[X(T)>C(T)+R|\mathsf{G}] + \Pr[\mathsf{B}]. \label{stash bound 1}
\end{align}
In the last step, we used the fact that the number of ordered binary trees with
$m$ nodes is bounded by $4^m$.

\begin{lemma} \label{stash bound 4}
For any internal subtree $T$ with $n(T) = m$,
\begin{equation*}
      \Pr[X(T)>C(T)+R|\mathsf{G}]
  \le (2Z)^{-R} \exp(-m(Z\ln{2Z} + 1/2 - Z)) \Pr[\mathsf{G}]^{-1}.
\end{equation*}
\end{lemma}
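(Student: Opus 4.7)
The plan is to apply the exponential Markov (Chernoff) inequality. For any $\alpha>0$,
\[
\Pr[X(T)>C(T)+R \mid \mathsf{G}] \le e^{-\alpha(C(T)+R)}\,E[e^{\alpha X(T)} \mid \mathsf{G}] \le e^{-\alpha(C(T)+R)}\,\frac{E[e^{\alpha X(T)}]}{\Pr[\mathsf{G}]},
\]
where the second step drops the conditioning at the cost of the $\Pr[\mathsf{G}]^{-1}$ factor that appears in the target bound. Since $T$ is an internal subtree with $n(T)=m$, every node has capacity $Z$, so $C(T)=Zm$.

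The heart of the matter is to bound $E[e^{\alpha X(T)}]$. The plan is to write $X(T)=\sum_{b} Y_b$, where the sum is over all blocks that have been accessed and $Y_b$ is the indicator that block $b$ resides in some node of $T$ in $S_\infty$. In $\mathrm{ORAM}_\infty$, buckets have infinite capacity, so no block can ever block another from descending; consequently the bucket in which $b$ ends up is a deterministic function of the (independent, uniform) sequence of position labels that were assigned to $b$ together with the public eviction schedule. Hence the indicators $\{Y_b\}$ are mutually independent. Writing $p_b:=\Pr[Y_b=1]$ and using $1+x\le e^x$,
\[
E[e^{\alpha X(T)}] = \prod_{b}\bigl(1+(e^\alpha-1)p_b\bigr) \le \exp\!\Bigl((e^\alpha-1)\sum_b p_b\Bigr) = \exp\bigl((e^\alpha-1)E[X(T)]\bigr) \le \exp\bigl((e^\alpha-1)m/2\bigr),
\]
where the last inequality is \cref{Ring ORAM Lemma 3}.

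Finally, the plan is to optimize $\alpha$. Choosing $\alpha=\ln(2Z)$ gives $e^\alpha=2Z$, so the exponent becomes
\[
-\ln(2Z)(Zm+R)+(2Z-1)m/2 \;=\; -R\ln(2Z) - m\bigl(Z\ln(2Z)+\tfrac12-Z\bigr),
\]
yielding exactly $(2Z)^{-R}\exp(-m(Z\ln 2Z+1/2-Z))\,\Pr[\mathsf{G}]^{-1}$.

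The main obstacle I expect is rigorously justifying the independence of the $\{Y_b\}$. One has to argue that even though evictions appear to couple blocks (through the shared tree), in $\mathrm{ORAM}_\infty$ the trajectory of each block through the tree depends only on the history of position labels assigned to that single block and on the deterministic bit-reversal eviction schedule (which is independent of the access sequence by the security argument in \cref{subsec: succ oram security}). Once this is established, the rest is a standard Chernoff computation with the exponent tuned to $\ln(2Z)$.
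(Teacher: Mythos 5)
Your proposal is correct and follows essentially the same route as the paper: the same exponential Markov step with the conditioning removed at the cost of $\Pr[\mathsf{G}]^{-1}$, the same decomposition of $X(T)$ into per-block indicators whose mutual independence rests on each block's final location in $\mathrm{ORAM}_\infty$ depending only on its own (single, since each address is accessed once) position label and the fixed bit-reversal eviction schedule, the same $1+x\le e^x$ bound combined with \cref{Ring ORAM Lemma 3} to get $E[e^{tX(T)}]\le\exp((e^t-1)m/2)$, and the same choice $t=\ln(2Z)$ with $C(T)=Zm$. The independence concern you flag is exactly the point the paper addresses (in one line) and your justification of it is sound.
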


\begin{proof}
For any $t>0$,
\begin{align}
        \Pr[X(T) >C(T)+R |\mathsf{G}]
  & =   \Pr[e^{tX(T)} > e^{t(C(T)+R)}| \mathsf{G}] \nonumber \\
  & \le E[e^{tX(T)} | \mathsf{G}] e^{-t(C(T)+R)} \nonumber \\
  & \le E[e^{tX(T)}] \Pr[\mathsf{G}]^{-1} e^{-t(C(T)+R)}. \label{stash bound 2}
\end{align}
For $j\in[N]$, let $X_j(T)$ be the indicator random variable of the event that,
in $S_\infty$, the $j$-th accessed real block is in $T$ and let $p_j :=
\Pr[X_j(T)=1]$.
Clearly, $\sum_j X_j(T) = X(T)$ and $E[X(T)]=E[\sum_j X_j(T)] = \sum_j
E[X_j(T)] = \sum_j p_j$.
The random variable $X_j(T)$ depends only on $j$ and the position label of the
$j$-th accessed real block.
Thus, $\{X_j(T)\}_{j\in[N]}$ are mutually independent.
Then,
\begin{align}
        E[e^{tX(T)}]
  & =   E[e^{t\sum_{j\in[N]}X_j(T)}] \nonumber \\
  & =   E[\Pi_{j\in[N]}e^{tX_j(T)}] \nonumber \\
  & =   \Pi_{j\in[N]} E[e^{tX_j(T)}] \nonumber \\
  & =   \Pi_{j\in[N]} (p_j(e^t - 1)+1) \nonumber \\
  & \le \Pi_{j\in[N]} \exp(p_j(e^t-1)) \nonumber \\
  & =   \exp((e^t-1)\Sigma_{j\in[N]}p_i) \nonumber \\
  & =   \exp((e^t-1)E[X(T)]). \label{stash bound 3}
\end{align}
We used the independence of $\{X_j(T)\}_{j\in[N]}$ in the third step.
Let $m := n(T)$.
From bounds \eqref{stash bound 2}, \eqref{stash bound 3} and \cref{Ring ORAM
Lemma 3}, $\Pr[X(T)>C(T)+R|\mathsf{G}]$ is bounded by
\begin{equation*}
    \exp((e^t-1)m/2)e^{-t(mZ+R)}\Pr[\mathsf{G}]^{-1}
  = \exp(-tR) \exp(-m(tZ-(1/2)(e^t-1))) \Pr[\mathsf{G}]^{-1}.
\end{equation*}
The lemma follows by setting $t = \ln{2Z}$.
\end{proof}

If $Z=3$, $q := Z\ln{2Z}+1/2-Z-\ln{4} = 1.4889 \dots > 0$.

By \eqref{stash bound 1} and \cref{stash bound 4},
$\Pr[st(S_Z)>R]$ is bounded by
\begin{equation*}
    \sum_{m \ge 1} 4^m 6^{-R} \exp(-m(q+\ln{4})) \Pr[\mathsf{G}]^{-1} + \Pr[\mathsf{B}]
  < \frac{(1/6)^R}{1-e^{-q}} \Pr[\mathsf{G}]^{-1} + \Pr[\mathsf{B}].
\end{equation*}
By \cref{balls-into-bins bound}, the bound above is $n^{-\omega(1)}$ if
$R=\omega(\log{n})$.

\section{Succincter ORAM Construction} \label{section: succincter oram}

In this section, we prove the following theorem.

\begin{theorem} \label{second main theorem}
Let $n$ be the database size and $B$ be the block size, both in bits.
If $B \ge 3\lg{n}$ and $B = O(n^c)$ for some $0 < c < 1$, then for any
$f:\N\to\R$ such that $f(n) = \omega(\log\log{n})$ and $f(n) = O(\log^2{n})$,
there exists an information theoretically secure ORAM construction for which
i) the server's space usage is bounded by
\[
	n \parens{1 + \Theta\parens{\frac{\log{n}}{B} +
								\frac{\log\log{n}}{f(n)}}} \text{ bits;}
\]
ii) the worst case bandwidth blowup is $O(\log^2{n})$;
iii) the user's temporary space usage is $O(\log{n}+f(n))$ blocks; and
iv) for any $R = \omega(\log{n})$, the probability that the user's permanent
space usage becomes larger than $R$ blocks during $\poly(n)$ logical accesses
is $n^{-\omega(1)}$.
\end{theorem}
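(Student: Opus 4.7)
The plan is to inherit the full template of \cref{main theorem} and replace the uniform position-label choice by a power-of-two-choices rule. Under the classical two-choice balls-into-bins phenomenon, the maximum load of $N$ balls thrown into $2^L$ bins concentrates at $N/2^L + \Theta(\log\log 2^L) = N/2^L + \Theta(\log\log n)$ rather than $N/2^L + \Theta(\sqrt{(N/2^L)L})$. This alone lets me shrink the leaf bucket capacity to $M = \lceil N/2^L + c\log\log n \rceil$ for a sufficiently large constant $c$, and plugging this into the server-space computation of \cref{section: succinct oram} directly produces the claimed $n(1 + \Theta(\log n/B + \log\log n/f(n)))$ bound.

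Concretely, I keep $L = \lceil \lg(N/f(n)) \rceil$, $Z = 3$, and every structural piece of \cref{section: succinct oram} (data tree, metadata tree, position-table sub-ORAM, stash, $\textsc{BitReversal}$-scheduled $\textsc{EvictPath}$), except that the leaf capacity shrinks to the above $M$ and an auxiliary array of $2^L$ leaf-load counters is maintained. Each counter has $O(\log f(n))$ bits, so the whole table is $o(n)$ bits; I access it through a second invocation of \cref{prop: Path ORAM} so that the server cannot infer which pair of bins was consulted. The only change inside $\textsc{Access}$ is that the single draw $\ell' \leftarrow \textsc{Random}(L)$ is replaced by: sample two independent labels $\ell_1', \ell_2' \in [2^L]$, read their counters via the counter sub-ORAM, set $\ell'$ to whichever is less loaded (ties broken arbitrarily), and update both the counter of the old label and of $\ell'$. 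Security is then \cref{subsec: succ oram security} verbatim (the extra physical accesses all go through an oblivious sub-ORAM); bandwidth is $ZL + M = O(\log^2 n)$ on the data path plus $O(1)$ sub-ORAM calls of $O(\log^2 n)$ each; temporary user space is $O(\log n + f(n))$ for a root-to-leaf path and the sub-ORAM buffers; and the server-space accounting of \cref{section: succinct oram} directly yields the stated bound.

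For the stash, the architecture of \cref{subsec: succ oram user space} --- the comparison of $\mathrm{ORAM}_Z$ with $\mathrm{ORAM}_\infty$ through the post-processing algorithm, the subtree decomposition via the first two lemmas imported from~\cite{Ren15}, the expectation bound \cref{Ring ORAM Lemma 3}, and the exponential-moment calculation \cref{stash bound 4} --- carries over unchanged, because each individual block's label is still marginally uniform on $[2^L]$ under the two-choice rule (by symmetry of the two draws), and the independence assumption that feeds \cref{stash bound 4} is used only conditional on the good event $\mathsf{G}$ that no leaf overflows. The single step that must be rebuilt is \cref{balls-into-bins bound}: I need $\Pr[\mathsf{B}] = n^{-\omega(1)}$ for the event that some leaf of $\mathrm{ORAM}_\infty$ holds more than $M$ real blocks under the two-choice rule.

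This new balls-into-bins lemma is the main obstacle. By the reduction of \cref{Path ORAM Lemma 3} it suffices to analyze $N$ distinct balls inserted sequentially into $2^L$ bins with two uniform independent choices each. The classical maximum-load statement of~\cite{Azar99} gives load at most $N/2^L + \log\log 2^L / \log 2 + O(1)$ with probability only $1 - 1/\mathrm{poly}(2^L)$, which is short of what I need. I would sharpen it via the witness-tree / layered-induction proof (increasing the number of layers so the tail decays faster) or via the majorization argument of~\cite{Berenbrink00}, to push the failure probability to $n^{-\omega(1)}$ while keeping the additive slack at $\Theta(\log\log n)$, absorbed into the constant $c$. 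Once this lemma is in place, the chain leading to $\Pr[st(S_Z) > R] = n^{-\omega(1)}$ for any $R = \omega(\log n)$ proceeds exactly as in \cref{subsec: succ oram user space}.
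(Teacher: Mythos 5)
Your blueprint matches the paper's at a high level (two-choice label assignment, a counter table behind a second sub-ORAM, a heavily-loaded two-choice balls-into-bins lemma replacing \cref{balls-into-bins bound}), but two steps you declare to carry over ``verbatim''/``unchanged'' are precisely the ones that break. First, security. You store only the winning label $\ell'$ and read a single path on the next access. The argmin of two uniform draws is marginally uniform by bin symmetry, but it is correlated with the counter state and hence with the access history, so the \emph{joint} distribution of the observed tree paths depends on $\mathbf{a}$. Concretely, with $2^L=2$ bins and two blocks, accessing the same block twice makes the two observed paths positively correlated (decrementing the old label's counter makes that bin lighter and thus favored by the fresh two-choice draw), while accessing two distinct blocks makes them negatively correlated (two-choice balancing repels labels); a short computation gives collision probabilities $5/8$ versus $1/4$, and for general $2^L$ the bias is of order $2^{-L} = f(n)B/n$, which is polynomially large, not negligible. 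Routing the counter lookups through an oblivious sub-ORAM does not help, because the leak is in the data-tree path itself. The paper's construction avoids exactly this: each block keeps \emph{two} i.i.d.\ uniform labels, both are stored in the position table, \textsc{ReadPath} is run on both $P(\ell_1)$ and $P(\ell_2)$ at every access, and which label is primary is recorded only in the block metadata and the (sub-ORAM-protected) counter table --- so the server sees only i.i.d.\ uniform pairs of paths.

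Second, the stash bound. Your claim that \cref{stash bound 4} survives because labels are marginally uniform and ``the independence assumption is used only conditional on $\mathsf{G}$'' is backwards: the proof factors the moment generating function as $E[\prod_j e^{tX_j(T)}] = \prod_j E[e^{tX_j(T)}]$ using \emph{unconditional} mutual independence of $\{X_j(T)\}_j$, and the conditioning on $\mathsf{G}$ is paid for afterwards by the $\Pr[\mathsf{G}]^{-1}$ factor. Under two choices the labels are coupled through the shared counters, so $\{X_j(T)\}_j$ are genuinely dependent --- the paper flags this in a footnote and rebuilds the lemma as \cref{stash bound 4 (two-choice)} via a coupling: a hypothetical $\mathrm{ORAM}'_\infty$ inserts a ``shadow'' of each accessed block carrying the swapped label pair, so that $X_j(T) \le Y_j(T)+Y'_j(T)$ where the sum is distributed as two \emph{independent} one-choice variables, giving $E[e^{tX(T)}] \le \exp(2(e^t-1)E[X(T)])$. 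The doubled exponent changes the optimization: with your $Z=3$ the per-node rate $Z\ln Z + 1 - Z - \ln 4 \approx -0.09$ is negative and the sum $\sum_m 4^m \cdot (\text{per-subtree bound})$ over internal subtrees diverges; the paper must take $Z=4$ with $t=\ln Z$. Finally, for the balls-into-bins step you propose sharpening the witness-tree/layered-induction arguments, but no new work is needed: Berenbrink et al.'s heavily-loaded theorem (gap exceeding $\lg\lg n + \gamma(c)$ with probability below $n^{-c}$ for \emph{every} constant $c$) already yields $\Pr[\mathrm{gap} > (1+\eps)\lg\lg n] = n^{-\omega(1)}$, which is exactly how the paper sets $M = \ceil{N/2^L + (1+\eps)\lg L}$ (see \cref{balls-into-bins bound (two-choice)}).
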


\begin{corollary}
If, in addition to the conditions of \cref{second main theorem}, $B =
\omega(\log{n})$, then, the ORAM construction of \cref{second main theorem} is
succinct.
\end{corollary}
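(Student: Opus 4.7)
The plan is to read off succinctness directly from the server-space bound stated in \cref{second main theorem}, by showing that the extra hypothesis $B=\omega(\log n)$, combined with the hypothesis $f(n)=\omega(\log\log n)$ already built into \cref{second main theorem}, forces the $\Theta(\cdot)$ term in the space expression to be $o(1)$.

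First, I will recall the definition of succinctness: an ORAM construction storing an $n$-bit database is succinct exactly when the server space usage is $n+o(n)$ bits. So the whole task reduces to checking that
\[
n\parens{1 + \Theta\parens{\frac{\log n}{B} + \frac{\log\log n}{f(n)}}}
\]
is $n+o(n)$ under the two asymptotic assumptions. I will handle the two terms separately. For the first, $B = \omega(\log n)$ is by definition the statement that $\log n / B \to 0$, so $\log n / B = o(1)$. For the second, $f(n) = \omega(\log\log n)$ (a hypothesis already carried over from \cref{second main theorem}) is by definition the statement that $\log\log n / f(n) \to 0$, so $\log\log n / f(n) = o(1)$. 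Adding, $\frac{\log n}{B} + \frac{\log\log n}{f(n)} = o(1)$, and hence the $\Theta(\cdot)$ expression is also $o(1)$.

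Plugging back in, the server-space bound becomes $n(1+o(1)) = n+o(n)$ bits, which matches the succinctness definition, completing the proof. The argument is a one-line unwinding of definitions, so there is no real technical obstacle; the only care required is to note that the hypothesis $f(n)=\omega(\log\log n)$ needed to kill the second term is \emph{not} new here but is inherited as a standing hypothesis of \cref{second main theorem}, while the hypothesis $B=\omega(\log n)$ used to kill the first term is precisely the additional assumption of the corollary.
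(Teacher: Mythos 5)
Your proof is correct and matches the paper's (implicit) argument exactly: the paper states this corollary without proof precisely because it follows by the one-line unwinding you give, namely that $B=\omega(\log n)$ makes $\frac{\log n}{B}=o(1)$ and the standing hypothesis $f(n)=\omega(\log\log n)$ makes $\frac{\log\log n}{f(n)}=o(1)$, so the server space bound of \cref{second main theorem} becomes $n(1+o(1))=n+o(n)$ bits. Your explicit remark about which hypothesis kills which term is a nice touch but does not change the route.
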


\cref{second main theorem} is stronger than \cref{main theorem}.
For example, if $B = \Omega(\log^2{n})$ and $f(n)=\Theta(\log{n}\log\log{n})$,
the server space bound of \cref{second main theorem} implies that the extra
server space is $\Theta(n/\log{n})$ and the user temporary space usage is
$\Theta(\log{n}\log\log{n})$.
In contrast, the extra server space bound of \cref{main theorem} is
$\omega(n/\sqrt{\log{n}})$ even if we allow the user's temporary space to
become $\Theta(\log^2{n})$.

In the rest of this section, $n, B, f(\cdot)$ are as described in the statement
of \cref{second main theorem}.

In the following exposition, we often refer to \cref{section: succinct oram} to
avoid repetition.
We recommend the readers to read \cref{section: succinct oram} beforehand.

\subsection{Description}

As in Section~\ref{section: succinct oram}, we first explain a simplified
version with a large user space usage, and construct the full version that
achieves the claimed bounds from the simplified version.

Let
\[
  L := \ceil{\lg(N/f(n))} \quad\text{and}\quad
  M := \ceil{N/2^L + (1+\eps)\lg{L}}
\]
where $N = n/B$ and $\eps>0$ is a constant.
We assume, for brevity, that each of $\lg(N/f(n))$ and $N/2^L + (1+\eps)\lg{L}$
is an integer.

\paragraph{Block usage.}

The block usage is the same as the ORAM construction described in
\cref{section: succinct oram} except that each real block is given \emph{two}
position labels instead of one.
We call them the \emph{primary position label} and the \emph{secondary position
label}.
Only the primary position labels are stored in the metadata blocks.

\paragraph{Data layout.}

The data layout is basically the same in \cref{section: succinct oram}.
We only explain the difference from \cref{section: succinct oram}.

First, the position table stores both the primary position labels and the
secondary position labels.

Second, the user maintains an additional table called \emph{counter table}.
It is a size $2^L$ array whose $i$-th entry is the number of real blocks with
primary position label $i$.

Last, since the value of each of $L$ and $M$ is different from that in
\cref{section: succinct oram}, the tree/bucket size is changed accordingly.

\paragraph{Access procedure.}

The same invariant conditions as \cref{section: succinct oram} are maintained
except that the ``position label'' in the second condition is replaced by
``primary position label''.

The main routine is described in \cref{algorithm: main (two choices)}.
The array $\mathrm{Pos}$ and the subroutines \textsc{Random}, \textsc{ReadPath}
and \textsc{EvictPath} are the same as in \cref{section: succinct oram} while
$\mathrm{Ctr}$ is the counter table.
We let $P(\ell)$ denote the path from the root to the $\ell$-th leaf in the
data tree.
For brevity, we assume that every block is already initialized, i.e., each real
block is assigned a valid value with the metadata stored in the corresponding
node of the metadata tree and the position table and counter table contain the
correct values.

Let $b_a$ be the accessed block.
We first retrieve the two position labels $\ell_1$ and $\ell_2$ of $b_a$ from
the position table and update each of the two position table values to a number
chosen independently and uniformly at random from $[L]$, which will become the
new position labels of $b_a$ (line 2--4).
One of $\ell_1$ and $\ell_2$ is the primary position label and the other is the
secondary position label but we do not know (and do not need to know) which is
which.
By the invariant conditions, $b_a$ is either in the stash or in $P(\ell_1)$ or
$P(\ell_2)$.
We scan $P(\ell_1)$ and $P(\ell_2)$ and retrieve $b_a$ from $P(\ell_i)$ if the
primary position label is $\ell_i$ and $b_a$ is in $P(\ell_i)$ (line 5).
If $b_a$ is not found in the paths, it must be in the stash and we retrieve it
from the stash (line 11--13).
At this point, we know the primary position label $\ell$ of $b_a$ (since it is
written in the \textsf{pos} entry of the block) and we decrement the $\ell$-th
entry of the counter table, determine the new primary position label $\ell'_i$
and increment the $\ell'_i$-th entry of the counter table (line 14--17).
After, that, we update the block contents if it is a write request (line
19--20), insert $b_a$ into the stash (line 21), call \textsf{EvictPath} (line
22) and returns the retrieved block content (line 23) all in the same way as
Algorithm~\ref{algorithm: main}.

\paragraph{Outsourcing the position/counter table.}

In the full version of the construction, the position table and the counter
table are stored on the server using the sub-ORAM in \cref{prop: Path ORAM}.
Every access to each of these tables is done using the sub-ORAM access
procedure.

\begin{algorithm}
  \caption{Main routine (two choices)}
  \label{algorithm: main (two choices)}
  \begin{algorithmic}[1]
    \Function{Access}{$a,\mathrm{op},v'$}
      \State $\ell'_1 \leftarrow \textsc{Random}(L), \ell'_2 \leftarrow \textsc{Random}(L)$
      \State $(\ell_1,\ell_2) \leftarrow \mathrm{Pos}[a]$
      \State $\mathrm{Pos}[a] \leftarrow (\ell'_1,\ell'_2)$
      \Statex
      \State $v_1 \leftarrow \textsc{ReadPath}(\ell_1,a), v_2 \leftarrow \textsc{ReadPath}(\ell_2,a)$
      \If {$v_1 \neq \bot$}
        \State $(v, \ell) \leftarrow (v_1, \ell_1)$
      \ElsIf {$v_2 \neq \bot$}
        \State $(v, \ell) \leftarrow (v_2, \ell_2)$
      \Else
        \State Find $(a,\ell'',v'') \in \text{stash}$
        \Comment{There exists $(a,\ell'',v'') \in \text{stash}$}
        \State $(v, \ell) \leftarrow (v'', \ell'')$
        \State $\text{stash} \leftarrow \text{stash} \setminus (a,\ell'',v'')$
      \EndIf
      \Statex
      \State $\mathrm{Ctr}[\ell] \leftarrow \mathrm{Ctr}[\ell]-1$
      \State $c_1 \leftarrow \mathrm{Ctr}[\ell'_1], c_2 \leftarrow \mathrm{Ctr}[\ell'_2]$
      \State $i \leftarrow \argmin \{c_1,c_2\}$
      \State $\mathrm{Ctr}[\ell'_i] \leftarrow c_i + 1$
      \Statex
      \State $ret \leftarrow v$
      \If {$\mathrm{op} = \mathrm{write}$}
        \State $v \leftarrow v'$
      \EndIf
        \State $\text{stash} \leftarrow \text{stash} \cup (a,\ell'_i,v)$
      \Statex
      \State \textsc{EvictPath}()
      \Statex
      \State return $ret$
    \EndFunction
  \end{algorithmic}
\end{algorithm}

\subsection{Security}

The security proof of the current ORAM construction is almost the same as in
\cref{subsec: succ oram security}.
The only difference in the situation is that now, the sequence of accessed
addresses $\mathbf{a}'_2$ depends on two position labels instead of one.
Anyway, these position labels are distributed independently and uniformly at
random and thus, are independent of $\mathbf{a}$.

\subsection{Server Space}

The bounds \eqref{useful bounds} still hold.

The number of blocks in the leaf buckets is
\begin{align*}
  M2^L & = N\parens{1+(1+\eps)\frac{\lg{L}}{f(n)}} \nonumber \\
       & = N\parens{1+ \Theta\parens{\frac{\log\log{n}}{f(n)}}}.
\end{align*}
The number of blocks in the internal buckets is $Z(2^L-1) < ZN/f(n)$, which is
$O(\frac{\log\log{n}}{f(n)})$.
Thus, the data tree size is bounded by $N(1+ \Theta(\frac{\log\log{n}}{f(n)}))$
blocks.
As in \cref{section: succinct oram}, the metadata size of each data block is
$\Theta(\frac{\log{n}}{B})$ blocks.
Thus, the number of blocks in the data tree and the metadata tree combined is
at most $1+\Theta(\frac{\log{n}}{B})$ times larger than $N(1+
\Theta(\frac{\log\log{n}}{f(n)}))$, which is
\begin{equation*}
  n \parens{1+ \Theta\parens{\frac{\log{n}}{B} + \frac{\log\log{n}}{f(n)}}} \text{ bits}.
\end{equation*}

Position labels take $2NL = 2nL/B \le 2n\frac{\log{n}}{B}$ bits while counter
table values take $2^L\ceil{\lg{N}} = N\ceil{\lg{N}}/f(n) \le N = n/B$ bits.
By \cref{prop: Path ORAM}, the sub-ORAM containing the position table (resp.
counter table) takes $\Theta(n\frac{\log{n}}{B})$ (resp. $\Theta(n/B)$) bits.

Therefore, the server space usage is bounded by $n (1+ \Theta(\frac{\log{n}}{B}
+ \frac{\log\log{n}}{f(n)}))$ bits.

\subsection{Bandwidth Blowup}

By the same argument as in the bandwidth analysis, the bandwidth cost of each
of \textsc{ReadPath} and \textsc{EvictPath} is proportional to $ZL+M =
O(\log{n}+f(n))$ (in blocks).
By \cref{prop: Path ORAM}, the bandwidth cost of access to each of the position
table and the counter table is $O(\log^2{n})$.
Thus, the bandwidth blowup is $O(\log^2{n})$.

\subsection{User Space}

By the same argument as in \cref{subsec: succ oram user space}, the temporary
user space is proportional to $ZL+M = O(\log{n}+f(n))$.

In the rest of the subsection, we bound the permanent user space, i.e., the
stash size.
Using the current ORAM construction, define $\mathrm{ORAM}_Z$,
$\mathrm{ORAM}_\infty$, $S_Z$ and $S_\infty$ analogously to \cref{subsec: succ
oram user space}.
Then, we prove $\Pr[st(S_Z)>R] = n^{-\omega(1)}$ for $R=\omega(\log{n})$.
Most arguments in \cref{subsec: succ oram user space} can be reused and we
focus on the differences.

First, \cref{balls-into-bins bound} still holds for the current
construction but the proof is different from \cref{subsec: succ oram user
space}.

\begin{proof}[Proof of \cref{balls-into-bins bound} for the two-choice
construction]
Define $\mathrm{load}_i$, $\mathrm{ctr}_i$ and $\mathrm{ctr}_{i,j}$ in the same
way as we did in the proof of \cref{balls-into-bins bound} except that the
primary position labels are used instead of the position labels.
By the same argument as the proof of \cref{balls-into-bins bound}, it
suffices to prove $\Pr[\cup_{i\in[2^L]} \mathrm{ctr}_i > M] = n^{-\omega(1)}$.

We apply an existing bound for the heavily loaded case of the
\emph{balls-into-bins game with two choices}.
In the balls-into-bins game with $m$ balls and $n$ bins (with one choice), each
of the $m$ balls is thrown into one of the $n$ bins chosen uniformly and
independently at random.
In the balls-into-bins game with two choices, for each ball, two bins are
chosen uniformly and independently at random.
Then, the ball is thrown into the least loaded bin.
The \emph{gap} of a balls-into-bins game with $m$ balls and $n$ bins is defined
to be the difference between the number of balls in the bin with the maximum
load and the average number of balls in a bin, i.e., $m/n$.
Berenbrink et al.~\cite{Berenbrink00} proved the following proposition.

\begin{proposition}
In the two-choice balls-into-bins game with $m$ balls and $n$ bins, for any
$c$, $\Pr[\text{gap} > \lg\lg{n} + \gamma(c)] < 1/n^c$, where $\gamma(c)$ is a
constant that depends only on $c$.
\end{proposition}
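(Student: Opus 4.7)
The plan is to prove the gap bound via the exponential-potential drift method, which is the standard modern tool for analyzing the two-choice allocation process in the heavily loaded regime. Index time by the ball number $t$, and for each bin $i$ let $X_i^{(t)}$ denote its load after $t$ balls. Introduce the centred loads $Y_i^{(t)} := X_i^{(t)} - t/n$ and the potential
\[
  \Phi_t := \sum_{i=1}^{n} \exp\!\parens{\alpha\, Y_i^{(t)}}
\]
for a suitable constant $\alpha > 0$. The core step is to establish a drift inequality of the form $\E[\Phi_{t+1}\mid\mathcal{F}_t] \le (1 - \delta/n)\,\Phi_t + c$ for some positive constants $\delta,c$; the underlying calculation uses the defining feature of two-choice allocation, namely that conditional on the current load profile, the probability that a bin of centred load $y$ receives the next ball is roughly proportional to the \emph{square} of the fraction of bins with load at least $y$. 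This quadratic selection probability is precisely what produces the doubly-exponential concentration characteristic of the power of two choices.

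From the drift inequality, a straightforward induction yields $\E[\Phi_t] = O(n)$ uniformly in $t$. Markov's inequality then gives
\[
  \Pr[\mathrm{gap} > k] \le \Pr[\Phi_t > e^{\alpha k}] \le \E[\Phi_t]\, e^{-\alpha k} = O\!\parens{n\, e^{-\alpha k}}.
\]
This bound alone only yields a $\Theta(\log n)$ gap. To push the threshold down to $\lg\lg n + \gamma(c)$, one iterates: treating the coarse tail bound as an input, one re-runs the drift argument with a sharper exponential potential that exploits the fact that most bins now lie close to the mean. Each round effectively removes a logarithm from the tail threshold, and after $O(1)$ rounds one bottoms out at the $\lg\lg n$ scale. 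Alternatively, the same conclusion can be reached by the layered-induction / witness-tree argument originally used in the lightly loaded case, adapted via a short-memory / coupling lemma that lets one reduce the heavy regime to a phase of length $\Theta(n)$ whose initial configuration is almost balanced.

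The main obstacle will be pinning down the drift inequality with constants sharp enough for the iteration to close at $\lg\lg n$ rather than at a polynomially larger quantity. The two-choice selection probability must be controlled using the \emph{current} conditional distribution of loads rather than a worst-case bound, so the analysis is intrinsically self-referential; Berenbrink et al.~\cite{Berenbrink00} manage this with a carefully layered induction. Once the proposition is in hand, we apply it to the instance with $2^L$ bins and $N$ balls: since $L = \Theta(\log n)$, the choice $M = \ceil{N/2^L + (1+\eps)\lg L}$ guarantees $M - N/2^L \ge \lg\lg 2^L + \gamma(c)$ for any constant $c > 0$ and all sufficiently large $n$, which combined with the union bound over $L = \omega(1)$ gives $\Pr[\cup_{i \in [2^L]} \mathrm{ctr}_i > M] = n^{-\omega(1)}$ and completes the proof of \cref{balls-into-bins bound} for the two-choice construction.
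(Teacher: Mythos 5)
You have attempted to prove a statement that the paper itself does not prove: the proposition is quoted verbatim from Berenbrink et al.~\cite{Berenbrink00} (``Berenbrink et al.\ proved the following proposition'') and is used strictly as a black box, with only its immediate consequence, $\Pr[\text{gap} > (1+\eps)\lg\lg{n}] = n^{-\omega(1)}$, feeding into the two-choice version of \cref{balls-into-bins bound}. So the benchmark here is a citation, and the question is whether your sketch actually constitutes an independent proof. It does not: the drift argument you outline is sound up to the point it delivers $\Pr[\mathrm{gap} > k] = O(n e^{-\alpha k})$, i.e., a $\Theta(\log n)$ gap, but the entire content of the proposition lies beyond that point, and your bridge to it --- ``one iterates\dots each round effectively removes a logarithm, and after $O(1)$ rounds one bottoms out at the $\lg\lg n$ scale'' --- is asserted, not argued. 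Nothing in the sketch explains why the iteration closes at $\lg\lg{n}$ with an \emph{additive constant} $\gamma(c)$, uniformly in the number of balls $m$; uniformity in $m$ is the defining difficulty of the heavily loaded case and is exactly what Berenbrink et al.\ needed their short-memory coupling (reducing the infinite process to a window of length $\poly(n)$ starting from a near-balanced configuration) together with a tight layered induction over roughly $\lg\lg{n}$ layers to obtain.

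There is also concrete evidence that your primary route cannot be patched to give the stated bound with known techniques: the paper itself remarks in \cref{section: conclusion} that the simpler potential-style analysis of Talwar and Wieder~\cite{Talwar14} --- which is essentially the drift-plus-bootstrapping program you describe --- yields a bound that ``is not as tight as that of Berenbrink et al.,'' and indeed the paper cites this looseness as the reason it could not derive non-asymptotic bounds for \cref{second main theorem}. Your fallback sentence naming the layered-induction/witness-tree argument with a short-memory lemma identifies the correct method, but naming it is a citation of \cite{Berenbrink00}, not a proof. Two minor points on your closing application paragraph: no union bound over the $2^L$ bins is needed there, since the gap statement already controls the \emph{maximum} load; and the $n^{-\omega(1)}$ conclusion comes from applying the proposition for every constant $c$ together with $2^L = \poly(n)$ (as $2^L \ge N/f(n)$ and $B = O(n^c)$ with $c<1$), which is how the paper derives \cref{balls-into-bins bound (two-choice)}.
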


\begin{corollary} \label{balls-into-bins bound (two-choice)}
In the two-choice balls-into-bins game with $m$ balls and $n$ bins,
$\Pr[\text{gap} > (1+\eps)\lg\lg{n}] = n^{-\omega(1)}$ for any $\eps>0$.
\end{corollary}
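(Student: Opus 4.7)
The plan is a direct reduction from the proposition to the corollary by unpacking the definition of $n^{-\omega(1)}$. Recall that a function is $n^{-\omega(1)}$ iff it is asymptotically smaller than $n^{-c}$ for every constant $c>0$. So the goal is: for an arbitrarily chosen constant $c>0$, show that $\Pr[\text{gap}>(1+\eps)\lg\lg n] < n^{-c}$ for all sufficiently large $n$.

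First I would fix an arbitrary $c>0$ and invoke the proposition to obtain the corresponding constant $\gamma(c)$. The proposition then gives $\Pr[\text{gap}>\lg\lg n+\gamma(c)] < n^{-c}$. The second step is to compare the two thresholds: since $\eps>0$ is fixed and $\gamma(c)$ is a constant, $\eps\lg\lg n \ge \gamma(c)$ holds for all sufficiently large $n$, i.e., for all $n \ge n_0(c,\eps)$. Equivalently, $(1+\eps)\lg\lg n \ge \lg\lg n + \gamma(c)$ for such $n$. By monotonicity of the tail event in the threshold,
\[
  \Pr[\text{gap}>(1+\eps)\lg\lg n] \le \Pr[\text{gap}>\lg\lg n+\gamma(c)] < n^{-c}
\]
for all $n \ge n_0(c,\eps)$. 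Since $c$ was arbitrary, this is precisely the statement $\Pr[\text{gap}>(1+\eps)\lg\lg n] = n^{-\omega(1)}$.

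There is essentially no obstacle here; the only subtlety worth being explicit about is the quantifier structure. The proposition provides, for every polynomial rate $n^{-c}$, a sufficient additive slack $\gamma(c)$ above $\lg\lg n$; the corollary upgrades this to a multiplicative $(1+\eps)$ factor by absorbing the constant $\gamma(c)$ into the $\eps\lg\lg n$ slack, which diverges with $n$. No tightening of the Berenbrink et al.\ bound is needed.
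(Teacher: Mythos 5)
Your proof is correct and matches the paper's (implicit) reasoning: the paper states this corollary without proof as an immediate consequence of the Berenbrink et al.\ proposition, and your argument---fixing an arbitrary $c$, invoking $\gamma(c)$, and absorbing the constant into the divergent slack $\eps\lg\lg n$ for $n \ge n_0(c,\eps)$---is exactly the intended derivation. Your attention to the quantifier order (choose $c$ first, then $\gamma(c)$, then $n_0$) is the only nontrivial point, and you handle it properly.
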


\noindent After processing the access requests, the $2^L$ values in the counter
table are distributed in exactly the same way as the bin loads after the
balls-into-bins game with two choices with $N$ balls and $2^L$ bins.
(Remember that each of the $N$ logical addresses is accessed exactly once.)
Thus,
\begin{align*}
        \Pr[\mathsf{B}]
  & \le \Pr\brackets{\cup_{i\in[2^L]} \mathrm{ctr}_i > M} \\
  & =   \Pr\brackets{\cup_{i\in[2^L]} \mathrm{ctr}_i > N/2^L+(1+\eps)\lg{L}} \\
  & =   (2^L)^{-\omega(1)} \\
  & =   n^{-\omega(1)}
\end{align*}
where we used \cref{balls-into-bins bound (two-choice)} in the third step.
\end{proof}

Next, we modify \cref{stash bound 4}.\footnotemark
\footnotetext{We need to do this since $\{X_j(T)\}_{j\in[N]}$ (defined
analogously in \cref{stash bound 4}) is not mutually independent due to
the two-choice strategy.}

\begin{lemma} \label{stash bound 4 (two-choice)}
For any internal subtree $T$ with $n(T)=m$,
  \begin{equation*}
        \Pr[X(T)>C(T)+R|\mathsf{G}]
    \le (Z)^{-R} \exp(-m(Z\ln{Z} + 1 - Z)) \Pr[\mathsf{G}]^{-1}.
  \end{equation*}
\end{lemma}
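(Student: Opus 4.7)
The plan is to mirror the proof of \cref{stash bound 4} but replace the direct use of mutual independence of $\{X_j(T)\}_{j \in [N]}$ with a pointwise domination by an independent family of $2N$ indicators, one per position label. As in the one-choice case, I would start from the Chernoff step
\[
  \Pr[X(T) > C(T) + R \mid \mathsf{G}] \le E[e^{tX(T)}]\,\Pr[\mathsf{G}]^{-1}\,e^{-t(C(T)+R)},
\]
so that the task reduces to bounding the MGF $E[e^{tX(T)}]$ without appealing to independence of the $X_j(T)$.

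The coupling I would use is the following. For each $j \in [N]$ and each $i \in \{1,2\}$, let $Y^{(i)}_j$ be the indicator of the event that, in the hypothetical single-choice run of $\mathrm{ORAM}_\infty$ in which block $j$'s only position label is its $i$-th label (with the eviction schedule and everything else unchanged), block $j$ ends up in $T$. Because buckets in $\mathrm{ORAM}_\infty$ are unbounded and the bit-reversal eviction schedule is oblivious to the data, the terminal bucket of block $j$ depends only on its primary label and its fixed last-access time; hence the pointwise bound
\[
  X_j(T) \;\le\; Y^{(1)}_j + Y^{(2)}_j.
\]
Each $Y^{(i)}_j$ depends only on the single random label $\ell^{(j)}_i$, and these labels are mutually independent across $(j,i)$, so $\{Y^{(i)}_j\}_{j,i}$ is a family of $2N$ independent Bernoulli variables with parameters $p^{(i)}_j$ matching the corresponding one-choice probabilities. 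Applying \cref{Ring ORAM Lemma 3} separately to each choice gives $\sum_j p^{(i)}_j \le m/2$ for $i \in \{1,2\}$, hence $\sum_{j,i} p^{(i)}_j \le m$.

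With this coupling in hand, the remainder is routine: for $t \ge 0$,
\[
  E[e^{tX(T)}] \le E\!\left[\prod_{j,i} e^{tY^{(i)}_j}\right] = \prod_{j,i} E[e^{tY^{(i)}_j}] \le \prod_{j,i}\exp\!\bigl(p^{(i)}_j(e^t-1)\bigr) \le \exp\bigl((e^t-1)m\bigr),
\]
which differs from \eqref{stash bound 3} only by replacing $m/2$ with $m$. Plugging this into the Chernoff bound, using $C(T) = mZ$ since $T$ is internal, and optimizing with $t = \ln Z$ (rather than $\ln 2Z$ as in the one-choice proof) yields exactly the claimed $Z^{-R}\exp(-m(Z\ln Z + 1 - Z))\,\Pr[\mathsf{G}]^{-1}$. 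The one delicate point is the domination $X_j(T) \le Y^{(1)}_j + Y^{(2)}_j$: I expect this to be the main obstacle and would justify it by observing that in $\mathrm{ORAM}_\infty$ there is no bucket contention, the accessed address sequence is fixed by the reduction to \cref{Path ORAM Lemma 3}, and the eviction schedule is a deterministic function of the access counter; therefore the two hypothetical single-choice experiments defining $Y^{(1)}_j$ and $Y^{(2)}_j$ are well posed and agree with the two-choice experiment on whichever of the two labels is actually selected as primary.
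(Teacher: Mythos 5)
Your proof is correct and is essentially the paper's own argument: the paper formalizes your pointwise domination $X_j(T) \le Y^{(1)}_j + Y^{(2)}_j$ by introducing a ``shadow'' block with swapped primary/secondary labels into a hypothetical $\mathrm{ORAM}'_\infty$ (the shadow occupies exactly the terminal bucket of the unused label, and shadows do not perturb real blocks since buckets are unbounded), arriving at the same MGF bound $E[e^{tX(T)}] \le \exp((e^t-1)m)$ via $2N$ independent one-choice indicators with total mean at most $m$. From there the Chernoff computation with $t = \ln Z$ is identical to yours.
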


\begin{proof}
The part where the arguments in \cref{subsec: succ oram user space} breaks down
is \eqref{stash bound 3}.
In \cref{subsec: succ oram user space}, we used the mutual independence of
$\{X_j\}_{j\in[N]}$ for the third step of \eqref{stash bound 3} but here,
$\{X_j\}_{j\in[N]}$ are not mutually independent.
We fix this problem as follows.

Let $\mathrm{ORAM}'_\infty$ be another hypothetical ORAM construction derived
by modifying $\mathrm{ORAM}_\infty$ so that every time a just accessed real
block $b$ is inserted into the stash, another block called $b$'s \emph{shadow}
is also inserted into the stash.
If $b$ is given the primary position label $\ell_1$ and the secondary position
label $\ell_2$ at the time $b$'s shadow $b'$ is inserted into the stash, $b'$
is given the primary position label $\ell_2$ and the secondary position label
$\ell_1$.
A shadow is evicted in the same way as a real block but it does not affect the
counter table.
Let $S'_\infty$ be $\mathrm{ORAM}'_\infty$ after processing the access
requests.
Since each of the $N$ real blocks is accessed exactly once, each real block in
$S'_\infty$ has one shadow.
For $j\in[N]$, let $Y_j(T)$ (resp. $Y'_j(T)$) be the indicator random variable
of the event that, in $S'_\infty$, the $j$-th accessed real block (resp. the
$j$-th accessed real block's shadow) is in subtree $T$.
Since shadows do not affect real blocks' move, $X_j$ and $Y_j$ are equivalent
random variables.
Also, since the primary and the secondary position label of each accessed block
is chosen independently and uniformly at random, $Y_j+Y'_j$ is distributed
equally as $U_j+U'_j$ where $U_j$ and $U'_j$ are independent random variables,
each distributed equally with the $X_j$ in the proof of \cref{stash bound 4}.
Thus, with $Y(T):=\sum_jY_j(T)$,
\begin{align}
      E[e^{tX(T)}] & \le E[e^{t\sum_j(Y_j(T)+Y'_j(T))}] \nonumber \\
  & = E[e^{t\sum_j(U_j+U'_j)}] \nonumber \\
  & = E[e^{t\sum_j U_j}] E[e^{t\sum_j U'_j}] \nonumber \\
  & = E[e^{t\sum_j U_j}]^2 \nonumber \\
  & = \exp((e^t-1)2E[X(T)]) \label{stash bound 3 (two-choices)}
\end{align}
where we used~\eqref{stash bound 3} in the last step.

Then, from \eqref{stash bound 2}, \eqref{stash bound 3 (two-choices)} and
\cref{Ring ORAM Lemma 3}, $\Pr[X(T)>C(T)+R|\mathsf{G}]$ is bounded by
\begin{equation*}
    \exp((e^t-1)m-t(mZ+R))\Pr[\mathsf{G}]^{-1}
  = \exp(-tR) \exp(-m(tZ-(e^t-1))) \Pr[\mathsf{G}]^{-1}.
\end{equation*}
The lemma follows by setting $t = \ln{Z}$.
\end{proof}

If $Z=4$, $q = Z\ln{Z} + 1 - Z - \ln{4} > 1.15888 \dots > 0$.
By \eqref{stash bound 1} and \cref{stash bound 4 (two-choice)},
$\Pr[st(S_Z)>R]$ is bounded by
\begin{equation*}
    \sum_{m \ge 1} 4^m 4^{-R} \exp(-m(q+\ln{4})) \Pr[\mathsf{G}]^{-1} + \Pr[\mathsf{B}]
  < \frac{(1/4)^R}{1-e^{-q}} \Pr[\mathsf{G}]^{-1} + \Pr[\mathsf{B}].
\end{equation*}
By \cref{balls-into-bins bound}, the bound above is $n^{-\omega(1)}$ if
$R=\omega(\log{n})$.

\section{Practicality of the Proposed Methods} \label{section: non-asymptotic}

\cref{table: performance with concrete params} shows the performance of the
proposed methods, the Path ORAM~\cite{Stefanov13} and the Ring
ORAM~\cite{Ren15} with concrete parameters.
The Ring ORAM has asymptotically the same performance as the Path ORAM but it
achieves constant factor smaller bandwidth at the cost of larger server space.
It is easy to integrate the main technique of the Ring ORAM to the internal
nodes of the proposed methods\footnotemark and we also show the performance of
these variants.
\footnotetext{Specifically, we modify the access procedure to access only one
block per each bucket (instead of all blocks in the bucket) by permuting the
blocks in each bucket. For this technique to work, we need to introduce, for
each bucket, additional space dedicated only for dummy blocks and this is why
we cannot apply this technique to the leaves maintaining succinctness.}

The table contains ``rigorous'' and ``aggressive'' parameter settings.
Rigorous parameters were derived from theoretical analysis with additional care
for constant factors.
The aggressive parameters for existing methods were taken from the experiments
in the original papers.
We chose the aggressive parameters for the proposed methods by simulation: we
simulated database scan (accessing addresses $1, 2, \dots, N$) for 100 times
and found some parameters for which the stash size after every scan was zero.
(Such usage of scan is standard in literature since \cref{Path ORAM Lemma 3}
means scan maximizes the stash size.)
We emphasize that constructions with aggressive parameters lack rigorous
security and they are not suitable for fair comparison.

Unfortunately, we could not derive rigorous bounds for the second construction
(Theorem~\ref{second main theorem}) for reasonable size of $N$ since the
balls-into-bins analysis of Berenbrink et al.~\cite{Berenbrink00}, used in the
stash size analysis, requires a very large number of bins.
However, the simulation results indicate that the second construction works for
reasonable size of $N$.

\begin{table}
  \centering
  \caption{Performance comparison with concrete parameters. The symbol
  $\dagger$ means the integration of Ring ORAM techniques. $N=2^{20}$,
  $B=2^{10}$. $A$ and $S$ are parameters for the Ring ORAM. ($A$ specifies the
  infrequency of \textsf{EvictPath} and $S$ is the space in each bucket
  reserved for dummy blocks.) The cost for recursive calls and metadata
  handling are relatively minor and not included. The stash overflow
  probability is $<2^{-80}$ for rigorous settings. Aggressive settings do not
  have security guarantees (stash size bounds) and, in particular, are not
  suitable for fair comparison.}
  \label{table: performance with concrete params}
  \begin{tabular}{c c c | c c c}
    \multicolumn{2}{c}{} &
    \begin{tabular}[x]{@{}c@{}}Parameters\\$Z,L,M,A,S$\end{tabular} &
    \begin{tabular}[x]{@{}c@{}}Extra server\\space\end{tabular} &
    Bandwidth &
    Stash size \\

    \hline

    \multirow{4}{*}{\rotatebox{90}{Rigorous}} &
       \cite{Stefanov13}                       & 5,20,--,--,--  & $9N$      & 210  & 114 \\
    &  \cite{Ren15}                            & 5,19,--,4,6    & $10N$     & 109  & 63 \\
    &  Th.~\ref{main theorem}                  & 3,15,112,--,-- & $2.59N$   & 471  & 32 \\
    &  Th.~\ref{main theorem}$^\dagger$        & 5,15,112,4,7   & $2.91N$   & 253  & 64 \\

    \hline

    \multirow{6}{*}{\rotatebox{90}{Aggressive}} &
       \cite{Stefanov13}                       & 4,19,--,--,--  & $3N$      & 160  & \\
    &  \cite{Ren15}                            & 5,19,--,4,6    & $7N$      & 145  & \\
    &  Th.~\ref{main theorem}                  & 4,15,36,--,--  & $.25N$    & 288  & \\
    &  Th.~\ref{main theorem}$^\dagger$        & 5,15,36,4,6    & $.46875N$ & 163  & \\
    &  Th.~\ref{second main theorem}           & 3,16,14,--,--  & $.0625N$  & 248  & \\
    &  Th.~\ref{second main theorem}$^\dagger$ & 5,15,28,4,7    & $.25N$    & 194  & \\
  \end{tabular}
\end{table}

\section{Conclusion} \label{section: conclusion}

ORAM is a multifaceted problem and recently, researchers have been recognizing
the importance of rethinking the relevancy of multiple aspects of ORAM using
modern standards~\cite{Stefanov12, Bindschaedler15}.
In this paper, we provided another point of view and insight for this
exploration by introducing the notion of succinctness to ORAM and proposing
succinct ORAM constructions.
We think our methods are particularly suitable for secure processor setting.
It is interesting to consider succinct constructions optimized for other
settings.

As we already mentioned, we could not derive non-asymptotic bounds for
\cref{second main theorem} (for reasonable size of $N$) since the analysis of
Berenbrink et al.~\cite{Berenbrink00} for heavily loaded case of two-choice
balls-into-bins game, on which we rely, requires a large number of bins.
However, simulation results suggest that \cref{second main theorem} does have a
significant impact in practice even for reasonably small $N$ and it is
desirable to close this gap between theory and practice.
One obvious approach is to fine tune the analysis of Berenbrink et al. to get a
better (non-asymptotic) bound but this seems difficult because of the
complexity of the analysis.
Talwar and Wieder gave an alternative simpler analysis~\cite{Talwar14} but the
resulting bound is not as tight as that of Berenbrink et al.

\section*{Acknowledgement}

This work was supported by JSPS KAKENHI Grant Number 17H01693, 17K20023JST and
CREST Grant Number JPMJCR1402.
We thank Paul Sheridan for helpful discussion.

\bibliographystyle{plain}
\bibliography{succ_oram}

\end{document}